\title{Quantum Query-Space Lower Bounds Using Branching Programs} 
\author{Debajyoti Bera}{Department of Computer Science, IIIT-Delhi, New Delhi, India. \and Center for Quantum Technologies, IIIT-Delhi, New Delhi, India.}{dbera@iiitd.ac.in}{}{}
\author{Tharrmashastha SAPV\footnote{Corresponding Author}}{Department of Computer Science, IIIT-Delhi, New Delhi, India.}{tharrmashasthav@iiitd.ac.in}{}{}
\authorrunning{D. Bera and Tharrmashastha SAPV} 
\keywords{Branching Program, Time-Space Tradeoff, Quantum Computing, Quantum Complexity} 
\newtheorem{problem}{Problem}
\newcommand{\nqbp}{NQBP\xspace}
\newcommand{\aqbp}{AQBP\xspace}
\newcommand{\gqbp}{GQBP\xspace}
\newcommand{\rgqbp}{r-GQBP\xspace}
\newcommand{\scrg}{$\mathcal{G}$\xspace}  
\newcommand{\expec}{\mathop{\mathbb{E}}}
\begin{document}

\maketitle

\begin{abstract}
Branching programs are quite popular for studying time-space lower bounds.
Bera et al.\ recently introduced the model of generalized quantum branching program {\it aka.} GQBP that generalized two earlier models of quantum branching programs. In this work we study a restricted version of GQBP with the motivation of proving bounds on the query-space requirement of quantum-query circuits.

We show the first explicit query-space lower bound for our restricted version. We prove that the well-studied OR$_n$ decision problem, given a promise that at most one position of an $n$-sized Boolean array is a 1, satisfies the bound $Q^2 s = \Omega(n^2)$, where $Q$ denotes the number of queries and $s$ denotes the width of the GQBP. 

We then generalize the problem to show that the same bound holds for deciding between two strings with a constant Hamming distance; this gives us query-space lower bounds on problems such as Parity and Majority. Our results produce an alternative proof of the $\Omega(\sqrt{n})$-lower bound on the query complexity of any non-constant symmetric Boolean function when the space is restricted to $\log n + O(1)$ qubits.
\end{abstract}

\newpage

\section{Introduction}
\label{sec:intro}

Can a quantum algorithm make fewer queries by cleverly using lots and lots of ancill\ae? Could it be that a speedup in the number of queries in comparison to classical algorithms (this is often quadratic) vanishes if only a few qubits are available? This is the essence of the query-space tradeoff that motivated this work. 

These questions are central to understanding the resource requirements of computational problems and, more importantly, the relationship between them. These have been studied in the form of time-space tradeoffs for classical algorithms. One of the favoured lines of attack for doing so is via analysis of branching programs.

A (classical) branching program is a generalization of a decision tree; structurally, a branching program looks like a dag in which the nodes are labelled using input variables. On any input, a branching program starts at the root node and evolves by querying the variable corresponding to the current node and then following the edge corresponding to the value of that variable. The input is accepted or rejected if the final node reached is marked accept or reject. This model is attractive since it nicely characterizes the space and time required by a Turing machine. The length (of the longest path in the dag corresponding to an input) can be related to the number of steps in a Turing machine taken on that input. The width, which is the largest number of nodes in any level of the dag, can also be related to the space complexity of a computation. Thus, branching programs are traditionally used to prove lower bounds on space and the space-time tradeoffs of various problems.

The study of space-time tradeoffs in the classical setting has been a long-standing one.
Despite numerous works on classical time-space trade-off complexity in various settings (\cite{abelson1979note, grigor1976application, tompa1978time, savage1978space, reischuk1980improved, pippenger1978time, munro1980selection, lengauer1979upper}), Borodin et al., were the first to study time-space lower bounds using branching programs, but one that allowed only comparison-based queries (\cite{borodin1979time}).
They showed that any classical algorithm for sorting $N$ inputs using comparison queries requires the time-space product $TS = \Omega(N^2)$ where $T$ and $S$ are the time and space complexities of the branching program, respectively.
Later in \cite{borodin1980time}, Boroding et al. showed that any $R$-way branching program for sorting $N$ integers with $R=N^2$ should satisfy $TS = \Omega(N^2/\log N)$.
This was followed by several time-space lower-bound results (\cite{yesha1984time,abrahamson1986time}) that used branching programs for various matrix problems.
Finally, Beame~\cite{beame1989general} showed that any $R$-way BP for sorting with inputs in the range $[1, R]$ and $R\ge N$ requires $TS = \Omega(N^2)$, which is optimal up to logarithmic factors.

The first result concerning classical query-space trade-off lower bounds for a \textit{decision problem} was proved in \cite{borodin1987time} for the element distinctness problem.
They showed that any $T$-query $S$-space deterministic comparison branching program that decides element distinctness should satisfy $TS = \Omega(n^{3/2}\sqrt{\log n})$.
Yao, in \cite{yao1988near}, improved this to $TS=\Omega(n^{2-\varepsilon(n)}), \varepsilon(n)=O(1/(\log n)^{1/2})$.
While showing any non-trivial time-space tradeoff lower bound for a natural problem in general branching programs was hard, Beame et al. (\cite{beame1998time}) constructed an explicit function to obtain the first non-trivial time-space trade-off lower bound in general branching programs.
They also showed a non-trivial time-space trade-off lower bound on the R-way branching program for another class of explicitly constructed functions.
This was followed by the work of Ajtai (\cite{ajtai2002determinism}) in which, along with other results, they give lower bounds for non-probabilistic algorithms for the element distinctness problem.
They prove that any two-sided error BP of subexponential size must have length $T = \Omega\bigg(n\bigg(\frac{\log\log n}{\log\log\log n}\bigg)\bigg).$
Using techniques that were essentially extensions of that in \cite{ajtai2002determinism} and \cite{beame1998time}, Beame et al. (\cite{beame2003time}) proved that any algorithm for the randomized computation of the element distinctness problem using space $S$ should satisfy $T = \Omega\bigg(n\sqrt{\log(n/S)/\log\log(S)}\bigg).$

In the quantum domain, some works prove time-space lower bounds for problems like sorting(\cite{klauck2007quantum}), $k$-disjoint collisions problems(\cite{hamoudi2020quantum}), Boolean matrix-vector multiplication(\cite{klauck2007quantum, beame2024quantum}), matrix-matrix multiplication(\cite{klauck2007quantum, beame2024quantum}), solving a system of inequalities(\cite{ambainis2006new}) and multiple matrix problems(\cite{beame2024quantum}).
However, to the best of our knowledge, no non-trivial time-space lower bound is known for any decision problem in any quantum model.

A quantum branching program would be a natural tool to derive space and space-time lower bounds. There are three types of quantum branching programs to the best of our knowledge. All the models maintain a global state (which could be a superposition of well-defined states) that changes upon querying some input bit. Nakanishi et al.~\cite{nakanishi2000ordered} used their model (that we refer to as NQBP following the nomenclature used by Bera et al.~\cite{bera2023generalized}) to separate ordered bounded-width NQBP and ordered bounded-width probabilistic branching programs (observe that a quantum branching program is by nature probabilistic). Ablayev et al.~\cite{ablayev2001computational} defined another model (that we refer to as AQBP), which they used to prove that $NC^1$ (the class of $O(\log n)$-depth and polynomial-size Boolean circuits) is contained in width-2 AQBPs. Very recently, Bera et al.~\cite{bera2023generalized} designed another model (that we refer to as GQBP) that generalized both the earlier models. They proved that problems in $NC^1$ can be solved by width-2 GQBPs, and further, any symmetric Boolean function can be computed by width-2 length-$O(n^2)$ GQBP.

Since the GQBP generalizes the two other models, we chose that model to derive our lower bounds.

\section{Overview of Results}

We summarise our contributions in this section.\\

\noindent{\bf 1. r-GQBP:} Our motivation was to derive time-space lower bounds of well-studied computational problems using the length and width dependencies of GQBPs. However, the general nature of the structure and the transition function of a GQBP made it challenging to bound.

Hence, we first define a restricted form of GQBP that we call r-GQBP. Intuitively, the paths (in superposition) followed for every input are identical, the only difference being the weights, rather the amplitudes, on the edges. Despite the rigid structure, we could prove that r-GQBPs can simulate any quantum query circuit and vice versa as efficiently as (unrestricted) GQBPs.\\

\noindent{\bf 2. Lower-bound for OR:}
The OR problem is one of the typical problems when it comes to proving a lower bound; given an $n$-bit Boolean array, the problem is to determine if there is any 1 in the array. We prove a trade-off between the length and the width of any r-GQBP, solving a promise version of the OR problem in which the array has either one 1 or no ones.

\begin{restatable}{theorem}{lowerboundor}
    \label{thm:lowerbound-or}
    Any r-GQBP of length $L$ and width $s$ that solves the Promise-OR problem has to satisfy $L = \Omega\bigg(\frac{n}{\sqrt{s}}\bigg)$.
\end{restatable}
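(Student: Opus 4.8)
The plan is to run a \emph{hybrid / progress-measure argument} directly on the r-GQBP, with the width $s$ entering through a single Cauchy--Schwarz step; no appeal to the r-GQBP--to--query-circuit equivalence is needed. Fix the all-zeros input $z = 0^n$ (a no-instance) and, for each $j \in \{1,\dots,n\}$, the input $e_j$ having a single $1$ in position $j$ (a yes-instance); by the promise these are the only inputs we must separate. Let $|\psi_t(x)\rangle$ be the state of the r-GQBP after its first $t$ layers on input $x$, and put $d_{t,j} = \big\||\psi_t(e_j)\rangle - |\psi_t(z)\rangle\big\|$. Since every input drives the machine to the same start configuration, $d_{0,j} = 0$ for all $j$.

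First I would prove the single-layer progress bound $d_{t+1,j} \le d_{t,j} + 2\sqrt{p_{t,j}}$, where $p_{t,j}$ is the probability mass that $|\psi_t(z)\rangle$ places on the layer-$t$ nodes labelled by variable $j$. This is where the r-GQBP rigidity does the work: because the dag and the incidence of edges are input-independent, the layer-$t \to (t{+}1)$ transition operators $T^{(z)}$ and $T^{(e_j)}$ differ only in the blocks attached to nodes labelled $j$. Writing $|\psi_{t+1}(e_j)\rangle - |\psi_{t+1}(z)\rangle = T^{(e_j)}\big(|\psi_t(e_j)\rangle - |\psi_t(z)\rangle\big) + \big(T^{(e_j)} - T^{(z)}\big)|\psi_t(z)\rangle$, the first term is norm-non-increasing (per-layer evolution on a legal input is norm-preserving), and the second has norm at most $2\sqrt{p_{t,j}}$ since $T^{(e_j)} - T^{(z)}$ is supported on the span of the layer-$t$ nodes labelled $j$ and has operator norm at most $2$. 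Unrolling from $t=0$ to $t=L-1$ gives $d_{L,j} \le 2\sum_{t=0}^{L-1}\sqrt{p_{t,j}}$.

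Next I would invoke correctness: if the r-GQBP solves Promise-OR with bounded error, its accepting probability on each yes-instance $e_j$ differs from that on $z$ by a constant, so the two final states cannot be close, i.e.\ $d_{L,j} \ge c$ for an absolute constant $c>0$ (the standard relation between distinguishing advantage and Euclidean distance of pure states). Combining with the previous bound and summing over $j$,
\[
cn \;\le\; 2\sum_{j=1}^{n}\sum_{t=0}^{L-1}\sqrt{p_{t,j}} \;=\; 2\sum_{t=0}^{L-1}\sum_{j=1}^{n}\sqrt{p_{t,j}}.
\]
For a fixed layer $t$ there are at most $s$ nodes, each carrying a single label, so the inner sum has at most $s$ nonzero terms; since the node sets labelled by distinct variables are disjoint, $\sum_j p_{t,j} \le \big\||\psi_t(z)\rangle\big\|^2 \le 1$, and Cauchy--Schwarz gives $\sum_j \sqrt{p_{t,j}} \le \sqrt{s}$. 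Hence $cn \le 2L\sqrt{s}$, i.e.\ $L = \Omega(n/\sqrt{s})$.

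The main obstacle, and the one place that needs the precise r-GQBP definitions rather than the intuition above, is the single-layer bound: one must verify (i) that the per-layer evolution on any legal input is norm-preserving, so that accumulated error is genuinely a sum of per-step perturbations, and (ii) that flipping $x_j$ perturbs the transition only on the nodes labelled $j$, with perturbation operator norm bounded by an absolute constant. Both should be immediate from the r-GQBP rigidity (identical dag, identical edge set, only edge amplitudes vary with the queried bit), but pinning down the constant in (ii) is the delicate part; everything after that is the bookkeeping displayed above.
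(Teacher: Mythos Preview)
Your proposal is correct and follows the same hybrid (BBBV-style) template as the paper: bound the one-step drift between the $0^n$-run and the $e_j$-run by the mass on nodes labelled $j$, telescope over the $L$ layers, sum over $j$ (the paper averages over $p$, which is the same up to a factor $n$), and close with Cauchy--Schwarz to extract the $\sqrt{s}$ factor per layer. The two differences are both in your favour. First, you work directly with the per-level transition $T^{(x)}$, whereas the paper detours through Corollary~\ref{corr:rgqbp-alternate-layers} to split each level into a diagonal phase layer and a fixed unitary; that lets the paper write the perturbation explicitly as $(1-e^{i\theta_{j,t}})\alpha_{j,t}$, but your support-plus-operator-norm argument yields the same $2\sqrt{p_{t,j}}$ bound without the splitting. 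Second, and more interestingly, your per-step bound uses only that $T^{(z)}$ and $T^{(e_j)}$ are both unitary (so $\|T^{(e_j)}-T^{(z)}\|\le 2$) and agree on columns indexed by nodes not labelled $j$ (so the difference factors through the projection onto those nodes); neither fact invokes the \rgqbp phase relation $v^1_m=e^{i\theta_m}v^0_m$, so your argument in fact proves the same $L=\Omega(n/\sqrt{s})$ tradeoff for unrestricted GQBPs as well.
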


\noindent{\bf 3. Lower-bound for Parity and Majority:} The OR problem is a special case of the $(k,k+\delta)$-Hamming decision problem, which is about deciding whether a binary string has $k$ ones or $k+\delta$ ones, given the promise that one of them must hold. Next, we prove an identical length-width lower bound on the Hamming decision problem, too. This gives us identical bounds for functions such as Parity, AND, Majority, etc.

\begin{restatable}{theorem}{hammingdecision} 
    \label{thm:lowerbound-khamprob}
    Any r-GQBP of length $L$ and width $s$ that solves the ($k,k+\delta$) Hamming Decision problem has to satisfy $L = \Omega\bigg(\frac{n}{\delta\sqrt{s}}\bigg)$.
\end{restatable}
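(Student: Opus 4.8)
The plan is to \emph{reduce} Promise-OR to the $(k,k+\delta)$-Hamming Decision problem and then apply \cref{thm:lowerbound-or}. Given a Promise-OR instance $x\in\{0,1\}^m$ (so $|x|\in\{0,1\}$), encode it as $y\in\{0,1\}^n$ as follows: fix the first $k$ coordinates of $y$ to $1$; split the next $m\delta$ coordinates into $m$ consecutive blocks of size $\delta$ and set every coordinate of the $i$-th block equal to $x_i$; fix the remaining $n-k-m\delta$ coordinates to $0$. Then $|y|=k+\delta\,|x|$, which equals $k$ when the OR is $0$ and $k+\delta$ when the OR is $1$, so $y$ always satisfies the Hamming promise. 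This is well-defined whenever $k+m\delta\le n$, so I take $m=\lfloor (n-k)/\delta\rfloor$.

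Next I would convert an \rgqbp $P$ of length $L$ and width $s$ for $(k,k+\delta)$-Hamming Decision into an \rgqbp $P'$ of the same length and width for $m$-bit Promise-OR, by rewriting the edge labels: a query of $P$ to a coordinate inside the $i$-th block becomes a query of $P'$ to $x_i$, while a query of $P$ to one of the hard-wired $0/1$ coordinates becomes a query whose two amplitude values coincide, both set to the constant that $y$ prescribes there (so that the transition no longer depends on what is read). Because an \rgqbp has a fixed underlying graph on which only the edge amplitudes vary with the input, these substitutions keep the object inside the \rgqbp class with unchanged width and length, and $P'$ on $x$ reproduces $P$ on $y$; hence $P'$ decides Promise-OR. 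Applying \cref{thm:lowerbound-or} to $P'$ gives $L=\Omega(m/\sqrt{s})$.

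It then remains to pass from $m$ to $n$. We may assume $\delta\le n/8$, since otherwise $n/(\delta\sqrt{s})=O(1)$ and the bound is vacuous, and we may assume $k\le n/2$: an \rgqbp for $(k,k+\delta)$-Hamming Decision becomes one for $(n-k-\delta,\,n-k)$-Hamming Decision if on every edge we swap the amplitude used when a $0$ is read with the one used when a $1$ is read (this runs $P$ on the bitwise complement of its input), and the smaller of the two weight parameters $k$ and $n-k-\delta$ is at most $n/2$; rename so that $k\le n/2$. Then $n-k\ge n/2$ and $m=\lfloor (n-k)/\delta\rfloor=\Omega(n/\delta)$, whence $L=\Omega(m/\sqrt{s})=\Omega\!\big(n/(\delta\sqrt{s})\big)$.

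I expect the delicate step to be the middle one: one has to verify against the formal definition of an \rgqbp that ``relabel the coordinate queried along an edge'' and ``make an edge's amplitude independent of the queried bit'' are admissible transformations that neither enlarge the width nor break the restricted, identical-paths-across-inputs structure. The weight identity $|y|=k+\delta|x|$, the complementation symmetry, and the arithmetic $m=\Omega(n/\delta)$ are all routine.
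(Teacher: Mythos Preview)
Your reduction is correct and takes a genuinely different route from the paper. The paper does not reduce to Promise-OR; it reruns the hybrid argument of \cref{thm:lowerbound-or} directly: fix $x$ of weight $k$, bound the single-step deviation between the runs on $x$ and on some $y$ of weight $k+\delta$ by $2\sum_{j\in\Delta(x,y,t)}|\alpha_{j,t}|$ (where $\Delta(x,y,t)$ is the set of level-$t$ nodes querying an index on which $x$ and $y$ differ), telescope, and then average over the $\binom{n-k}{\delta}$ choices of $y$ that agree with $x$ on all of $x$'s $1$-coordinates, which turns $\Delta$ into the full level up to a factor $\delta/(n-k)$. The case $k>n/2$ is handled not by your complementation trick but by the symmetric averaging with the roles of $x$ and $y$ interchanged. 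Your approach is more modular --- it invokes \cref{thm:lowerbound-or} as a black box rather than redoing the hybrid bookkeeping --- at the cost of the structural check you flag, which does go through: relabelling $L(m)$ from a block coordinate to the block index leaves the transition vectors untouched, and hardwiring a node to the fixed value $b\in\{0,1\}$ amounts to setting $v'^0_m=v'^1_m=v^{\,b}_m$, which is still of the \rgqbp form (with phase $0$) and merely multiplies one column of the level's $0$-transition matrix by a unit-modulus scalar, so quantum well-behavedness is preserved. The paper's direct argument avoids that verification but largely duplicates the machinery of \cref{thm:lowerbound-or}.
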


\noindent{\bf 4. Lower-bound for non-constant symmetric function:}
Using the above results and an observation about non-constant symmetric functions, we are able to prove the following trade-off.

\begin{restatable}{theorem}{lowerboundsymmgqbp}\label{thm:lowerbound-gqbp-symm-function}
    Any ($s,L$)-\rgqbp that computes a non-constant symmetric function $f:\{0,1\}^n \rightarrow \{0,1\}$ has to satisfy $L = \Omega\bigg(\frac{n}{\sqrt{s}}\bigg)$.
\end{restatable}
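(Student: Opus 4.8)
The plan is to reduce the task of computing a non-constant symmetric function to the $(k,k+1)$-Hamming Decision problem and then invoke Theorem~\ref{thm:lowerbound-khamprob} with $\delta = 1$. The starting point is the elementary observation that any non-constant symmetric $f : \{0,1\}^n \to \{0,1\}$ exhibits a ``jump'' between two consecutive Hamming weights. Since $f$ is symmetric, $f(x)$ depends only on $|x|$, so $f$ is described by a sequence $f_0, f_1, \dots, f_n \in \{0,1\}$ with $f(x) = f_{|x|}$; non-constancy means the $f_j$ are not all equal, hence there is some $k \in \{0, \dots, n-1\}$ with $f_k \neq f_{k+1}$.

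Next I would show that an $(s,L)$-r-GQBP $P$ computing $f$ already solves the $(k,k+1)$-Hamming Decision problem on $\{0,1\}^n$. Restricting attention to inputs of Hamming weight exactly $k$ or exactly $k+1$, the program $P$ outputs $f_k$ in the first case and $f_{k+1}$ in the second, and since $f_k \neq f_{k+1}$ this is precisely a correct procedure for distinguishing the two weight classes, with the same (bounded) error as $P$. Crucially, the branching structure and the edge amplitudes of $P$ are left untouched: we merely shrink the promised set of admissible inputs, so $P$ remains a bona fide $(s,L)$-r-GQBP for this promise problem.

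Applying Theorem~\ref{thm:lowerbound-khamprob} with $\delta = 1$ then yields $L = \Omega\!\big(n/\sqrt{s}\big)$, which is the claimed bound. I do not expect any real obstacle here beyond the two routine checks above — the existence of the ``jump'' and the fact that narrowing the input promise preserves membership in the r-GQBP model; all the quantitative content of the statement is inherited directly from Theorem~\ref{thm:lowerbound-khamprob}.
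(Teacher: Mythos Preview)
Your proposal is correct and follows essentially the same argument as the paper: identify a Hamming-weight threshold $k$ at which the symmetric function flips value, observe that any \rgqbp for $f$ therefore solves the $(k,k+1)$ Hamming Decision problem on the restricted input set, and invoke Theorem~\ref{thm:lowerbound-khamprob} with $\delta=1$.
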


Beals et al.~\cite{beals2001quantum} introduced the polynomial method for quantum lower bounds; they showed that the number of queries, denoted $Q(f)$, to compute any non-constant symmetric function $n$-bit function $f$ is $\Omega(\sqrt{n})$~\footnote{Technically, their bound is a bit stronger. But their bound collapses to this one for functions that flip their value on inputs with constant Hamming weights or Hamming weights close to $n$.}, and this is without any restriction on the number of qubits. Using the above theorem and our quantum circuit to \gqbp conversion, we could prove the same lower bound with a restriction on space usage.

\begin{restatable}{theorem}{symmetricfunction}
    \label{corr:circ-lowerbound}
    Any quantum query circuit that computes a non-constant symmetric function $f:\{0,1\}^n \rightarrow \{0,1\}$ with $\log n + O(1)$ qubits has to satisfy $Q(f) = \Omega(\sqrt{n})$.
\end{restatable}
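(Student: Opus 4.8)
The plan is to reduce the circuit lower bound to the branching-program lower bound of Theorem~\ref{thm:lowerbound-gqbp-symm-function} via the circuit-to-\rgqbp conversion mentioned in the overview. First I would take an arbitrary quantum query circuit $C$ that computes a non-constant symmetric $f:\{0,1\}^n\to\{0,1\}$ using $m = O(\log n)$ qubits and making $Q = Q(f)$ queries. Invoking the (stated) result that any quantum query circuit can be simulated by an \rgqbp as efficiently as by a \gqbp, I would convert $C$ into an \rgqbp whose length $L$ is $O(Q)$ (one layer of the program per query, possibly with a constant blow-up) and whose width $s$ is $2^{O(m)} = \mathrm{poly}(n)$, since the width of the resulting branching program is governed by the dimension of the circuit's state space, which is $2^m$. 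The key quantitative point is that $m = O(\log n)$ forces $s = n^{O(1)}$, so $\sqrt{s} = n^{O(1)}$ as well — but we need the exponent to be small enough that the bound survives; I should be careful that the constant in $O(\log n)$ is absorbed, i.e. $s \le n^{c}$ for the relevant constant $c$, and track how $c$ enters.

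Next I would apply Theorem~\ref{thm:lowerbound-gqbp-symm-function} to this \rgqbp: since it computes the non-constant symmetric function $f$, it must satisfy $L = \Omega\!\left(\frac{n}{\sqrt{s}}\right)$. Combining $L = O(Q)$ and $s = n^{O(1)}$ would then give $Q = \Omega\!\left(\frac{n}{\sqrt{s}}\right) = \Omega\!\left(n^{1 - O(1)/2}\right)$, which is not immediately $\Omega(\sqrt{n})$ unless the exponent is controlled. This is the crux: a naive width bound of the form $s = 2^{O(m)}$ with an uncontrolled constant only yields $Q = \Omega(n^{1-\epsilon})$ for some $\epsilon$ depending on that constant, which could exceed $1/2$. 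So the main obstacle is showing that the circuit-to-\rgqbp conversion produces width exactly $s = 2^{m} = O(n^{?})$ with a good enough constant, or else strengthening the argument: the cleanest route is to observe that the conversion gives $s = 2^{m}$ precisely (the dimension of the Hilbert space), so with $m \le \log_2 n + O(1)$ we get $s = O(n)$, hence $\sqrt{s} = O(\sqrt{n})$ and therefore $Q = \Omega(n/\sqrt{s}) = \Omega(n/\sqrt{n}) = \Omega(\sqrt{n})$, as desired.

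Thus the steps in order are: (i) invoke the circuit-to-\rgqbp simulation to obtain an \rgqbp with $L = O(Q(f))$ and $s = 2^{m}$; (ii) substitute $m = O(\log n)$ to get $s = O(n)$ (being explicit that $O(\log n)$ qubits means at most $\log_2 n + O(1)$ qubits, or handling a general multiplicative constant by noting it only affects the hidden constant in the final $\Omega$); (iii) apply Theorem~\ref{thm:lowerbound-gqbp-symm-function} to conclude $L = \Omega(n/\sqrt{s})$; (iv) chain the inequalities to obtain $Q(f) = \Omega(n/\sqrt{n}) = \Omega(\sqrt{n})$. The only delicate point requiring care is the bookkeeping in step (ii): if "$O(\log n)$ qubits" is read literally as $c\log n$ qubits for an arbitrary constant $c$, then $s = n^{c}$ and the bound degrades to $\Omega(n^{1-c/2})$, which is $\Omega(\sqrt{n})$ only when $c \le 1$; so either the theorem should be stated with $s = O(\mathrm{poly}(n))$ replaced by the sharper $\log_2 s \le \log_2 n + O(1)$, or the intended reading of "$O(\log n)$ qubits" is genuinely "$\log n + O(1)$ qubits", and I would flag this so the constant is pinned down precisely in the write-up.
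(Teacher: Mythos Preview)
Your approach matches the paper's exactly: invoke Theorem~\ref{thm:rgqbp-to-circ} to convert a $q$-qubit $Q$-query circuit into a $(2^q,Q)$-\rgqbp, then apply Theorem~\ref{thm:lowerbound-gqbp-symm-function} to get $Q=\Omega(n/\sqrt{2^q})$. Your caveat about the constant in ``$O(\log n)$'' is well taken --- the paper offers no formal proof beyond this one-line reduction and does not address the point, so as stated the conclusion $Q=\Omega(\sqrt{n})$ indeed requires reading ``$O(\log n)$ qubits'' as $\log_2 n + O(1)$ qubits (equivalently $s=O(n)$) rather than $c\log n$ for arbitrary $c$.
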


In the classical setting, branching programs are a reasonable model of computation to study due to their relations with circuits and Turing machines. 
We believe the same is true for the quantum branching programs.
While the query-space lower bounds for the query circuits derived from that of the r-GQBPs may not be significant, we believe these lower bounds could be used to derive significant lower bounds for quantum circuits and Turing machines. 
Although a more hardcore analysis is required, the lower bound results on \rgqbp informally imply that it is not possible to obtain a quadratic speedup using sub-logarithmic space in Turing machines when computing arbitrary symmetric functions.

\section{Background: Generalized Quantum Branching Programs}
\label{sec:background}

There are a few different models of quantum branching programs all of which try to bring some {\em quantum feature} to the classical model. The first model was introduced by Nakanishi et al.\ that allows the state transitions to follow multiple paths (in superposition)~\cite{nakanishi2000ordered}; however, to decide acceptance and rejection, a measurement is performed after every transition. They were able to show that there exists an ordered bounded-width \nqbp that computes the ``Half Hamming'' weight function --- a function that indicates if the Hamming weight of the input is $n/2$, but no ordered bounded-width probabilistic branching program can compute the same.

Later, Ablayev et al.\ introduced another quantum branching program model~\cite{ablayev2001computational}.
They showed that while any ordered binary decision diagram has to be of a width at least $p_n$ to compute the $Mod_{p_n}$ function, which indicates if the Hamming weight of the input is divisible by the prime $p_n$, there is an $O(\log(p_n))$ width \aqbp that computes the $Mod_{p_n}$ function with one-sided error.
In addition, they showed that any language in the $NC1$ complexity class is exactly decidable by width-2 AQBPs of polynomial length.

Sauerhoff et al.\ presented a few interesting results concerning a restricted version of NQBPs~\cite{sauerhoff2005quantum}.
Recently, Maslov et al showed that any symmetric Boolean function can be computed by some \aqbp of width $2$ and length $O(n^2)$~\cite{maslov2021quantum}.

Generalized quantum branching programs were recently introduced by Bera et. al~\cite{bera2023generalized}.

\begin{definition}[Generalized quantum branching program (\gqbp)]
    \label{def:gqbp}
    A generalized quantum branching program is a tuple $(Q, E, \ket{v_0}, L, \delta,  F)$ where
    \begin{itemize}
        \item $Q = \bigcup_{i=0}^{l} Q_i$ is a set of nodes where $Q_i$ is the set of nodes at level $i$.
        \item $E$ is a set of edges between the nodes in $Q$ such that $(Q,E)$ is a levelled directed acyclic multi-graph.
        \item $\ket{v_0}$ is the initial state which is a superposition of the nodes in $Q_0$.
        \item $\delta : Q\times \{0,1\} \times Q \xrightarrow{} \mathbb{C}$ is a quantumly well-behaved transition function.
        \item $L : Q \xrightarrow{} \{1,2\cdots, n\}$ is a function that assigns a variable to each node in $Q\setminus F$.
        \item $F\subseteq Q$ is the set of terminal nodes.
    \end{itemize}
\end{definition}

Let's understand the evolution of \gqbp since our results are in this model. Let $x$ denote the input string. Consider any node $v$ at some level; $v$ is labelled with some input variable, say $x_i$, that can be derived as $i=L(v)$. There are two sets of outgoing transitions from $v$, one corresponding to $x_i=0$ and another corresponding to $x_i=1$, and each of them is to one or more nodes in the next level. Furthermore, each transition, say $v \substack{0 \\ \mapsto} w$ is associated with a complex $\delta(v,0,w)$; the intuition being that, if $x_i=0$, the state will be changed to a superposition of all the target nodes of transitions corresponding to 0, and the amplitudes of the superpositions will be specified by the corresponding $\delta$ values. The notion of ``quantum well-behavedness'' essentially ensures that the superposition is a quantum state. We can state the process more formally as follows.

\begin{enumerate}
    \item Given access to an input $x$ using oracle $O_x$, define $U^{O_x}_i : \mathcal{H}(d)\xrightarrow{}\mathcal{H}(d)$ for $i\in \{1,2,\cdots, l\}$ that has query access to $O_x$ as
    \begin{equation*}
        U^{O_x}_i \ket{v} =  \sum_{v'\in Q_i}\delta(v, O_x(L(v)), v')\ket{v'}
    \end{equation*}
    for any $\ket{v}\in Q_{i-1}$ where $\mathcal{H}(d)$ is the Hilbert space on $d$ dimensions with $d$ as the width of the QBP.
    \item  Start at $\ket{v_0}$.
    \item For $i=1$ to $l$, perform $\ket{\psi_{i}} = U^{O_x}_{i}\ket{\psi_{i-1}}$
    where $\ket{\psi_0} = \ket{v_0}$.
    \item Measure $\ket{\psi_{l}}$ in the standard basis. If the observe node is in $F$ then the input is accepted, else rejected.
\end{enumerate}

We say that a QBP \scrg \textbf{exactly computes} a function $f$ if for all inputs $x$, \scrg accepts $x$ if $f(x)=1$ and rejects $x$ if $f(x)=0$, both with probability $1$.
Similarly, a QBP \scrg is said to \textbf{bounded-error compute} a $f$ if for all inputs $x$, \scrg accepts $x$ if $f(x)=1$ and rejects $x$ if $f(x)=0$, both with probability at least $2/3$. We stick to the latter in this work; wherever we talk about a \gqbp\ {\em solving} some function, we mean that in the bounded-error sense.

There are two primary complexity measures of a \gqbp. The \textit{length} of a \gqbp is the length of the longest path from the root node to any single node.
The \textit{width} of a \gqbp is the maximum number of nodes on any level of the \gqbp.
A \gqbp $P$ is said to be a ($w,l$)-\gqbp if $P$ is a $w$ width $l$ length \gqbp.

In addition to introducing the \gqbp model, Bera et al. demonstrated that the \gqbp model generalizes the \nqbp and the \aqbp models. More specifically, they showed that any ($w,l$)-\aqbp is also a ($w,l$)-\gqbp and that any $s$-sized $l$-length \nqbp can be simulated exactly by a ($s,l$)-\gqbp. However, neither \gqbp nor the other models have been used to prove length or width lower bounds on any explicit function.

\section{Restricted-GQBP}
\label{sec:rgqbp}

In this section, we introduce a restricted version of the \gqbp model called the restricted-\gqbp (or, \rgqbp) and show some results concerning them.
More importantly, we show that the \rgqbp model is equivalent to the quantum query circuit model.
This equivalence motivates the investigation of query-space lower bounds on the r-GQBPs, which could be translated to query-space lower bounds on quantum query circuits.


\begin{definition}
    A GQBP is called a restricted-GQBP (or r-GQBP) if, for any node $m$ on any level, its transition vectors $v^0_m$ and $v^1_m$, corresponding to the $0$ and $1$ transitions, differ only by a phase, i.e., there exists some $\theta$ such that $v^1_m = e^{i\theta}v^0_m$.
\end{definition}

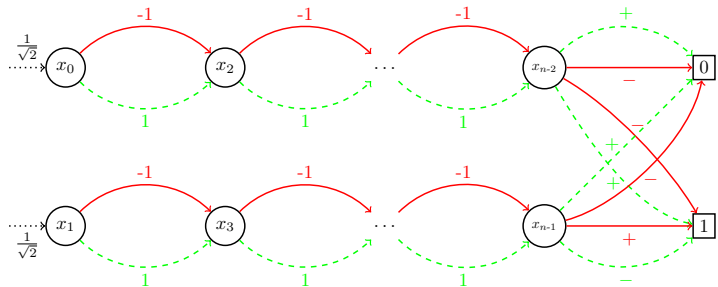
\begin{figure}[ht!]

\begin{center}
\scalebox{0.7}{
    \begin{tikzpicture}[node distance={30mm}, thick, main/.style = {draw, circle}, empty/.style = {circle}, rect/.style={draw, rectangle}] 
            \node[empty] (0) {$\hdots$};
            \node[empty] (1) [below of=0] {$\hdots$};
            \node[main] (2) [left of=0] {$x_2$};
            \node[main] (3) [left of=1] {$x_3$};
            \node[main] (4) [left of=2]{$x_0$}; 
            \node[main] (5) [left of=3] {$x_1$};
            \node[main] (6) [right of=0] {\scalebox{0.75}{$x_{n\text{-}2}$}};
            \node[main] (7) [right of=1] {\scalebox{0.75}{$x_{n\text{-}1}$}};
            \node[rect] (8) [right of=6] {$0$};
            \node[rect] (9) [right of=7] {$1$};
            \node[empty](10) [left of=4, xshift=50pt] {};
            \node[empty](11) [left of=5, xshift=50pt] {};

            \draw[->, dotted] (10) to node[above] {$\frac{1}{\sqrt{2}}$} (4);
            \draw[->, dotted] (11) to node[below] {$\frac{1}{\sqrt{2}}$} (5);
            
            \draw[->, green, dashed] (4) to [out=-45,in=-135,looseness=1] node[below] {1} (2);
            \draw[->, red] (4) to [out=45, in=135,looseness=1] node[above] {-1} (2);
            \draw[->, green, dashed] (5) to [out=-45,in=-135,looseness=1] node[below] {1} (3);
            \draw[->, red] (5) to [out=45, in=135,looseness=1] node[above] {-1} (3);

            \draw[->, green, dashed] (2) to [out=-45,in=-135,looseness=1] node[below] {1} (0);
            \draw[->, red] (2) to [out=45, in=135,looseness=1] node[above] {-1} (0);
            \draw[->, green, dashed] (3) to [out=-45,in=-135,looseness=1] node[below] {1} (1);
            \draw[->, red] (3) to [out=45, in=135,looseness=1] node[above] {-1} (1);

            \draw[->, green, dashed] (0) to [out=-45,in=-135,looseness=1] node[below] {1} (6);
            \draw[->, red] (0) to [out=45, in=135,looseness=1] node[above] {-1} (6);
            \draw[->, green, dashed] (1) to [out=-45,in=-135,looseness=1] node[below] {1} (7);
            \draw[->, red] (1) to [out=45, in=135,looseness=1] node[above] {-1} (7);

            \draw[->, green, dashed] (6) to [out=45,in=135,looseness=1] node[above] {+} (8);
            \draw[->, red] (6) to node[below] {$-$} (8);
            \draw[->, green, dashed] (6) to [out=-60,in=170,looseness=0.75] node[below] {+} (9);
            \draw[->, red] (6) to [out=-30,in=120,looseness=0.75] node[above] {$-$} (9);
            
            \draw[->, green, dashed] (7) to [out=45,in=-135,looseness=0.75] node[left] {+} (8);
            \draw[->, red] (7) to [out=15,in=-100,looseness=0.75] node[below] {$-$} (8);
            \draw[->, green, dashed] (7) to [out=-45,in=-135,looseness=1] node[below] {$-$} (9);
            \draw[->, red] (7) to node[below] {+} (9);

            
        \end{tikzpicture} 
        }
        \end{center}
        \caption{The \gqbp presented in~\cite{bera2023generalized} for the $Parity_n$ problem (reproduced here with the authors' kind permission) was found to be an \rgqbp. $+$ denotes $\tfrac{1}{\sqrt{2}}$ and $-$ denotes $-\tfrac{1}{\sqrt{2}}$. The green (dotted) and the red (dashed) lines correspond to the $0$ and the $1$ transitions. \label{fig:gqbp-parity}}
    \end{figure}

    

This restriction on the transition vectors of each node presents a nice structure to the \rgqbp model, providing the following lemma as a consequence.
Later, we exploit this structure to prove query-space lower bounds on \rgqbp and even for quantum query circuits. See \cref{fig:rgqbp-split} for an illustration of the next lemma.

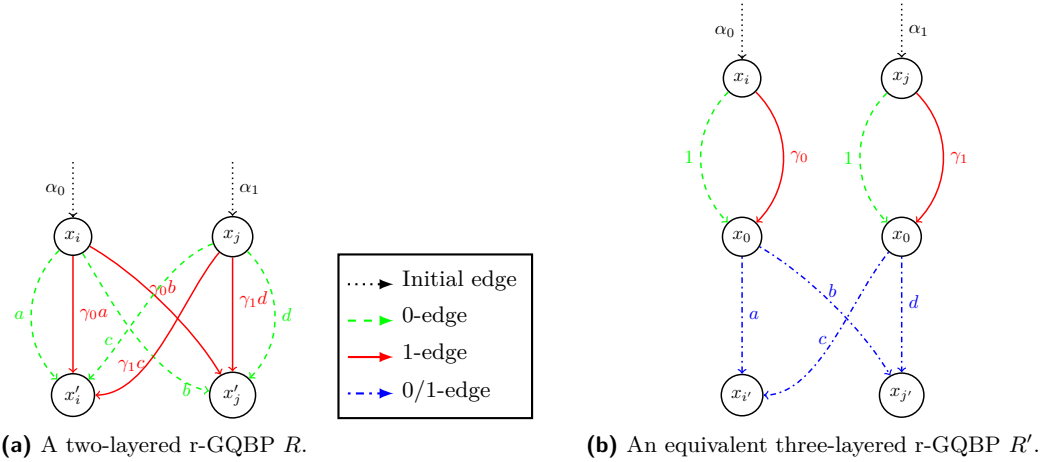
\begin{figure}
    \centering
    \begin{subfigure}{0.5\textwidth}
        \centering
\label{fig:rgqbp}
\scalebox{0.7}{
        \begin{tikzpicture}[node distance={30mm}, thick, main/.style = {draw, circle}, empty/.style = {circle}, rect/.style={draw, rectangle}] 
            \node[main] (0) {$x_i$}; 
            \node[main] (1) [right of=0] {$x_j$};
            \node[main] (2) [below of=0] {$x_i'$};
            \node[main] (3) [right of=2]{$x_j'$}; 
            \node[empty](4) [above of=0, yshift=-40pt] {};
            \node[empty](5) [above of=1, yshift=-40pt] {};

            \draw[->, dotted] (4) to node[left] {$\alpha_0$} (0);
            \draw[->, dotted] (5) to node[right] {$\alpha_1$} (1);
            
            \draw[->, green, dashed] (0) to [out=-135,in=135,looseness=1] node[left] {$a$} (2);
            \draw[->, red] (0) to node[right] {$\gamma_0 a$} (2);
            \draw[->, green, dashed] (0) to [out=-60,in=170,looseness=0.75] node[below right=30pt] {$b$} (3);
            \draw[->, red] (0) to [out=-30,in=120,looseness=0.75] node[above] {$\gamma_0 b$} (3);

            \draw[->, green, dashed] (1) to [out=200,in=45,looseness=0.75] node[below left=20pt] {$c$} (2);
            \draw[->, red] (1) to [out=230,in=0,looseness=0.75] node[below left=10pt] {$\gamma_1 c$} (2);
            \draw[->, green, dashed] (1) to [out=-45,in=45,looseness=1] node[right] {$d$} (3);
            \draw[->, red] (1) to node[above right] {$\gamma_1 d$} (3);
\end{tikzpicture} \hspace*{1em}
}  \begin{tikzpicture}[node distance={30mm}, thick, main/.style = {draw, circle}, empty/.style = {circle}, rect/.style={draw, rectangle}]
        \footnotesize
            \matrix [draw,below right] at (current bounding box.north east) {
                \draw[-latex,color=black, dotted](0,-0.3) -- (0.6,-0.3); & \node{Initial edge};\\
                \draw[-latex,color=green, dashed](0,-0.3) -- (0.6,-0.3); & \node{$0$-edge};\\
                \draw[-latex,color=red](0,-0.3) -- (0.6,-0.3); & \node{$1$-edge};\\
                \draw[-latex,color=blue, dash pattern = on 1 pt off 2 pt on 3 pt off 2 pt](0,-0.3) -- (0.6,-0.3); & \node{$0/1$-edge};\\
            };
            
        \end{tikzpicture} 
        \caption{A two-layered \rgqbp $R$.}
    \end{subfigure}
    \hfill
    \begin{subfigure}{0.45\textwidth}
        \centering
        \label{fig:rgqbp-two-layer}
        \scalebox{0.7}{
        \begin{tikzpicture}[node distance={30mm}, thick, main/.style = {draw, circle}, empty/.style = {circle}, rect/.style={draw, rectangle}] 
            \node[main] (0) {$x_i$}; 
            \node[main] (1) [right of=0] {$x_j$};
            \node[main] (2) [below of=0] {$x_0$};
            \node[main] (3) [right of=2]{$x_0$}; 
            \node[empty](6) [above of=0, yshift=-40pt] {};
            \node[empty](7) [above of=1, yshift=-40pt] {};
            
            \node[main] (4) [below of=2] {$x_{i'}$};
            \node[main] (5) [below of=3] {$x_{j'}$};

            \draw[->, dotted] (6) to node[left] {$\alpha_0$} (0);
            \draw[->, dotted] (7) to node[right] {$\alpha_1$} (1);
            
            \draw[->, green, dashed] (0) to [out=-135,in=135,looseness=1] node[left] {$1$} (2);
            \draw[->, red] (0) to [out=-45,in=45,looseness=1] node[right] {$\gamma_0$} (2);

            \draw[->, green, dashed] (1) to [out=-135,in=135,looseness=1] node[left] {$1$} (3);
            \draw[->, red] (1) to [out=-45,in=45,looseness=1] node[right] {$\gamma_1$} (3);

            \draw[->, blue, dash pattern = on 1 pt off 2 pt on 3 pt off 2 pt] (2) to node[right] {$a$} (4);
            \draw[->, blue, dash pattern = on 1 pt off 2 pt on 3 pt off 2 pt] (2) to [out=-30,in=120,looseness=0.75] node[above] {$b$} (5);

            \draw[->, blue, dash pattern = on 1 pt off 2 pt on 3 pt off 2 pt] (3) to [out=230,in=0,looseness=0.75] node[left] {$c$} (4);
            \draw[->, blue, dash pattern = on 1 pt off 2 pt on 3 pt off 2 pt] (3) to node[above right] {$d$} (5);
            
        \end{tikzpicture} 
        }
        \caption{An equivalent three-layered \rgqbp $R'$.}
    \end{subfigure}
    \caption{Figure (a) shows a two-layered one-transition \rgqbp $R$. Figure (b) is a three-layered two-transition \rgqbp $R'$ that is equivalent to $R$. Note that the first transition of $R'$ is query-dependent, whereas the second transition of $R'$ is query-independent.}
    \label{fig:rgqbp-split}
\end{figure}

\begin{lemma}
    \label{lemma:rgqbp-layer-split}
    The transition between any two levels of an \rgqbp can be split into two transitions between three levels such that the transition between the first and the second levels is query-dependent, and the transition between the second and the third levels is query-independent.
\end{lemma}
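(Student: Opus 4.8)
The plan is to use the defining property of an \rgqbp directly. Fix the transition, say the one from level $i-1$ to level $i$ computed by the operator $U^{O_x}_i$. For every node $m\in Q_{i-1}$ the \rgqbp restriction says its two transition vectors satisfy $v^1_m = e^{i\theta_m} v^0_m$ for some real $\theta_m$, so $U^{O_x}_i\ket{m} = e^{i\theta_m\cdot O_x(L(m))}\, v^0_m$, where $O_x(L(m))\in\{0,1\}$. I would factor this as $U^{O_x}_i = V_i\, D^{O_x}_i$, where $D^{O_x}_i\colon \ket{m}\mapsto e^{i\theta_m O_x(L(m))}\ket{m}$ is a diagonal, query-dependent operator, and $V_i\colon \ket{m}\mapsto v^0_m$ is a fixed operator with no query dependence whatsoever.

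I would then verify that both factors give rise to legitimate \gqbp transitions, i.e.\ are unitary (equivalently, the corresponding transition functions are quantumly well-behaved). For $D^{O_x}_i$ this is immediate, since it is diagonal with unit-modulus entries for every input $x$. For $V_i$, the key observation is that $V_i$ is precisely the operator $U^{O_x}_i$ instantiated on the all-zeros input $x = 0^n$, for which every queried bit is $0$; since an \rgqbp is in particular a \gqbp, $U^{0^n}_i$ is unitary, hence so is $V_i$, and being input-independent it defines a query-independent transition.

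With the factorization in place, I would realise it as a three-level gadget exactly as in the right half of \cref{fig:rgqbp-split}: insert a new middle level $Q'$ consisting of a fresh copy $\tilde m$ of each node $m\in Q_{i-1}$; let the first (query-dependent) transition send $\ket{m}\mapsto e^{i\theta_m O_x(L(m))}\ket{\tilde m}$, i.e.\ $D^{O_x}_i$ up to relabelling; and let the second (query-independent) transition send $\ket{\tilde m}\mapsto v^0_m$, i.e.\ set $\delta(\tilde m, 0, v') = \delta(\tilde m, 1, v') = \delta(m, 0, v')$ for all $v'\in Q_i$. Since the labelling function $L$ must assign a variable to each new middle node, I would give each $\tilde m$ an arbitrary label; this is harmless precisely because the second transition ignores the queried value. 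Composing the two new transitions reproduces $V_i D^{O_x}_i = U^{O_x}_i$ for every $x$, so the evolution of the \rgqbp is unchanged, the width is not increased, and the length grows by exactly one.

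The only real content of the argument --- and thus the step to watch --- is the unitarity of the query-independent factor $V_i$: it has to be deduced from the fact that $U^{O_x}_i$ is unitary for \emph{all} inputs (in particular for the all-zeros input), rather than taken for granted. Everything else is routine bookkeeping about levels, labels, and relabelling.
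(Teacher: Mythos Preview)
Your proposal is correct and takes essentially the same approach as the paper: both factor the transition as a diagonal, query-dependent phase operator followed by the fixed ``$0$-input'' transition, realise the factorization by inserting a fresh copy of the source level, and verify that the composite reproduces $U^{O_x}_i$ for every $x$. The paper carries this out via an explicit coordinate computation on the states, whereas you phrase it as the operator identity $U^{O_x}_i = V_i D^{O_x}_i$ and argue unitarity of $V_i$ by instantiating at $x=0^n$; the content is the same.
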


\begin{proof}
    For simplicity consider a two leveled \rgqbp $P = (Q,E,\ket{v_0}, L, \delta, F)$ of width $s$. Let $M_1 = \{m_{1,1}, \cdots, m_{1,s}\}$ and $M_2 = \{m_{2,1}, \cdots, m_{2,s}\}$ be the nodes in the first and the second levels of $P$ respectively.
    In addition, let $v_k^0$ and $v_k^1$ be the transition vectors corresponding to the $0$ and $1$ transitions of the node $m_{1,k}$.
    More formally, let the transition function $\delta$ be defined as $\delta(m_{1,k}, a, m_{2,k'}) = v_k^a[k']$.
    Recall that at any node $m_{1,k}$, the transition vectors $v_k^0$ and $v_k^1$ are related as $v_k^1 = e^{i\theta_k}v_k^0$ for all $k$.
    Now, introduce a new set of $s$ nodes $\hat{M} = \{\hat{m}_{1,1}, \cdots, \hat{m}_{1,s}\}$.
    Define a transition function $\delta'$ as $\delta'(\hat{m}_{1,k}, a, m_{2,k'}) = v_k^0[k']$ and
    $$\delta'(m_{1,k}, a, \hat{m}_{1,k'}) = \begin{cases}
        e^{i\theta_k}, & \text{if } k=k'~\text{and}~a=1\\
        1, & \text{if } k=k'~\text{and}~a=0\\
        0, & \text{else}.
    \end{cases}$$
    Notice that for any node $\hat{m}_{1,k}$ in level two, the transition vector corresponding to that node is $v^0_k$ and is independent of the value $a$ of the query index.
    
    Let $Q' = M_1 \cup \hat{M_1} \cup M_2$ and $E' = \{(v,v') : v\in M_1, v'\in \hat{M_1}, \text{ or }v\in \hat{M_1}, v;\in M_2\}$.
    Finally, define a variable assignment function $L' : Q' \rightarrow \{1,2,\cdots, n\}$ as $L'(v) = L(v)$ if $v\in Q$ and $L'(v) = 1$ if $v\notin Q$.
    Using these definitions, construct a new \rgqbp $R = (Q', E', \ket{v_0}, L', \delta', F)$. 
    To prove the theorem, it suffices to show that for any input $X\in \{0,1\}^n$ and an arbitrary initial state $\ket{v_0}$, the final states of $P$ and $R$ are identical.

    Let $\ket{v_0} = \sum_{k\in [s]} \alpha_s \ket{m_{1,k}}$ and let $X\in\{0,1\}^n$ be any arbitrary input.
    Then, the state of $P$ obtained after evolving $\ket{v_0}$ with $\delta$ will be
    \begin{align*}
        \ket{\phi_P} &= U^{O_X}_1\ket{v_0} = \sum_{k\in [s]} \alpha_k U^{O_X}_P\ket{m_{1,k}} = \sum_{k\in [s]} \alpha_k \sum_{k'\in[s]}\delta(m_{1,k}, X_{L(m_{1,k})}, m_{2,k'}) \ket{m_{2,k'}}\\
        &= \sum_{k\in [s]} \alpha_k \sum_{k'\in[s]} v_k^{X_{\tau_k}}[k'] ~\ket{m_{2,k'}} = \sum_{k'\in [s]} \bigg[\sum_{k\in [s]} \alpha_k v_k^{X_{\tau_k}}[k'] \bigg] \ket{m_{2,k'}}.
    \end{align*}
    where $\tau_k = L(m_{1,k})$, for $k\in [s]$, is the query index corresponding to the node $m_{1,k}$.
    
    In the \rgqbp $R$, the state obtained after the first transition is 
    \begin{align*}
        \ket{\psi} &= U'^{O_X}_1\ket{v_0} = \sum_{k\in [s]} \alpha_k U'^{O_X}_R\ket{m_{1,k}}
        = \sum_{k\in [s]} \alpha_k \sum_{k'\in[s]}\delta'(m_{1,k}, X_{L(m_{1,k})}, \hat{m}_{1,k'}) \ket{\hat{m}_{1,k'}} \\
        &= \sum_{k\in [s]} \alpha_k e^{i\theta_k\cdot X_{\tau_k}} \ket{\hat{m}_{1,k}}
    \end{align*}
    Then, the final state after the second transition can be given as
    \begin{align*}
        \ket{\phi_R} &= U'^{O_X}_2\ket{\psi} = \sum_{k\in [s]} \alpha_k e^{i\theta_k\cdot X_{\tau_k}} U'^{O_X}_2\ket{\hat{m}_{1,k}}
        = \sum_{k\in [s]} \alpha_k e^{i\theta_k\cdot X_{\tau_k}} \sum_{k'\in [s]} \delta'(\hat{m}_{1,k}, L(\hat{m}_{1,k}), m_{2,k'})\ket{m_{2,k'}}\\
        &= \sum_{k'\in [s]} \bigg[\sum_{k\in [s]} \alpha_k e^{i\theta_k\cdot X_{\tau_k}}  \delta'(\hat{m}_{1,k}, X_1, m_{2,k'})\bigg]\ket{m_{2,k'}} = \sum_{k'\in [s]} \bigg[\sum_{k\in [s]} \alpha_k e^{i\theta_k\cdot X_{\tau_k}}  v^0_k[k']\bigg]\ket{m_{2,k'}}
    \end{align*}
    Note that if $X_{\tau_k} = 0$, then $e^{i\theta_k X_{\tau_k}}v^0_k[k'] = v^0_k[k']$. Whereas if $X_{\tau_k} = 1$, then $e^{i\theta_k X_{\tau_k}}v^0_k[k'] = e^{i\theta_k}v^0_k[k'] = v^1_k[k']$.
    So, we have $e^{i\theta_k X_{\tau_k}}v^0_k[k'] = v^{X_{\tau_k}}_k[k']$.
    This gives us
    \begin{align*}
        \ket{\phi_R} &= \sum_{k'\in [s]} \bigg[\sum_{k\in [s]} \alpha_k e^{i\theta_k\cdot X_{\tau_k}}  v^0_k[k']\bigg]\ket{m_{2,k'}}
        = \sum_{k'\in [s]} \bigg[\sum_{k\in [s]} \alpha_k  v^{X_{\tau_k}}_k[k']\bigg]\ket{m_{2,k'}}
        = \ket{\phi_P}.
    \end{align*}
    This shows that any transition between two levels of an \rgqbp can be split into a query-dependent transition followed by a query-independent transition.
\end{proof}
While Lemma \ref{lemma:rgqbp-layer-split} deals with splitting one transition between two layers, it can be seen that this can be easily extended to an \rgqbp of any length. Thus, we obtain the following corollary.

\begin{corollary}
    \label{corr:rgqbp-alternate-layers}
    For any \rgqbp $P$ of length $l$ and width $s$, there exists an \rgqbp $Q$ of length $2l$ and width $s$ such that 
    \begin{enumerate}
        \item The levels of $Q$ alternate between a query-dependent level and a query-independent level, with the first level being query-dependent.
        \item $Q$ is equivalent to $P$, i.e, $Q$ exactly simulates $P$ and $P$ exactly simulates $Q$.
    \end{enumerate}
\end{corollary}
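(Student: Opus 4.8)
The plan is to obtain $Q$ from $P$ by applying Lemma~\ref{lemma:rgqbp-layer-split} once to each of the $l$ transitions of $P$, processed in order. Write the levels of $P$ as $Q_0, Q_1, \dots, Q_l$, so that the program carries the state from $Q_{i-1}$ to $Q_i$ by the query-dependent operator $U^{O_X}_i$. For $i = 1, 2, \dots, l$ in turn, I would take the single transition between the current level $Q_{i-1}$ and level $Q_i$, view it as the two-level \rgqbp of the lemma with ``initial state'' equal to the state that the accumulated earlier transitions deliver to $Q_{i-1}$ (which is supported on the nodes of $Q_{i-1}$, exactly the form the lemma assumes), and replace it by the two transitions the lemma produces: a query-dependent transition $Q_{i-1} \to \hat Q_i$ into a fresh level $\hat Q_i$ of exactly $s$ nodes, followed by a query-independent transition $\hat Q_i \to Q_i$. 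After all $l$ steps the levels are $Q_0, \hat Q_1, Q_1, \hat Q_2, Q_2, \dots, \hat Q_l, Q_l$ and the $2l$ transitions alternate query-dependent, query-independent, query-dependent, $\dots$, beginning with a query-dependent one; this is conclusion~(1). (An equivalent organisation is induction on $l$, with base case $l=1$ being precisely Lemma~\ref{lemma:rgqbp-layer-split} and the inductive step splitting the first transition and invoking the hypothesis on the remaining length-$(l-1)$ \rgqbp.)

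For this to be legitimate I would verify two invariants that are maintained at every step. First, the partially transformed object is always an \rgqbp, so that the lemma can be re-applied to the next transition: the newly inserted query-dependent transition from $m_{1,k}$ to $\hat m_{1,k}$ has $0$- and $1$-transition vectors related by the phase $e^{i\theta_k}$, and a query-independent transition is the degenerate case $\theta = 0$ of the restriction, so the \rgqbp property is never violated (the new intermediate nodes are non-terminal and carry the dummy label $1$, exactly as in the lemma's proof, and $F$, which sits at the last level, is untouched apart from having the transition into it split like any other). Second, the width stays $s$: every inserted level $\hat Q_i$ has exactly $s$ nodes and every original level had at most $s$ nodes, so the maximum over the resulting $2l+1$ levels is still $s$.

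It remains to argue equivalence, conclusion~(2). The crucial point is that the guarantee in Lemma~\ref{lemma:rgqbp-layer-split} holds for an arbitrary initial state, so at step $i$ the state reaching $Q_i$ is literally unchanged by the split, for every input $X$; since all later transitions are then applied verbatim to that state, the final state of the whole program is unchanged at every step, and hence the final state of $Q$ equals that of $P$ on every input. Identical final states give identical measurement distributions, hence identical acceptance probabilities on every input, and the relation is symmetric, so $Q$ exactly simulates $P$ and $P$ exactly simulates $Q$. I do not anticipate a genuine obstacle here: all the real work sits in Lemma~\ref{lemma:rgqbp-layer-split}, and the only care required is the bookkeeping that the intermediate program stays an \rgqbp of width $s$ and that it is the lemma's ``arbitrary initial state'' clause that licenses chaining the $l$ splits together.
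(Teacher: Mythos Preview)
Your proposal is correct and follows essentially the same approach as the paper: apply Lemma~\ref{lemma:rgqbp-layer-split} to each of the $l$ transitions of $P$ in turn, doubling the length while keeping the width at $s$, and invoke the lemma's equivalence guarantee for the second conclusion. The paper's proof is terser and omits the bookkeeping you spell out (that the intermediate object remains an \rgqbp, that the width is preserved, and that the ``arbitrary initial state'' clause is what justifies chaining), but those checks are exactly the right ones to make and nothing in your argument diverges from the paper's.
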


\begin{proof}
    Let $P$ be an \rgqbp of length $l$ and width $s$.
    We can obtain a new \rgqbp $Q$ by using Lemma~\ref{lemma:rgqbp-layer-split} and replacing each transition in $P$ with two transitions, one query-dependent, followed by a query-independent transition in $Q$.
    Since the number of transitions in the original \rgqbp $P$ is $l$, the number of transitions in $Q$ obtained as a result of replacing the transitions is $2l$.
    So, the length of $Q$ is $2l$.
    The width, however, remains the same as that of $P$.
    The equivalence of $P$ and $Q$ is a direct consequence of Lemma~\ref{lemma:rgqbp-layer-split}.
\end{proof}

\subsection*{Equivalence of r-GQBPs and quantum query circuits:}

We now show the equivalence between r-GQBPs and quantum query circuits.

\begin{theorem}
    \label{thm:rgqbp-to-circ}
     Any $q$-qubit $t$-query quantum query circuit $C$ can be exactly simulated by a $2^q$-width $t$-length \rgqbp.
\end{theorem}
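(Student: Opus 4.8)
The plan is to put $C$ into the standard layered form and then read off an \rgqbp whose nodes are the $2^q$ computational-basis states of the circuit, fusing each oracle call with the unitary that immediately follows it into a single \rgqbp transition; that this fusion yields a legitimate \rgqbp transition is essentially the content of \cref{lemma:rgqbp-layer-split} read backwards. Concretely, first I would normalize $C$ as $C=U_t\,O_x\,U_{t-1}\,O_x\cdots U_1\,O_x\,U_0$, where each $U_j$ is an input-independent $q$-qubit unitary and there are $t$ copies of the oracle $O_x$. By the usual phase-kickback manoeuvre -- conjugating the bit oracle $\ket{i}\ket{b}\mapsto\ket{i}\ket{b\oplus x_i}$ by a Hadamard on the response qubit and absorbing those Hadamards into the neighbouring $U_j$'s -- I may assume, without changing $q$ or $t$, that $O_x$ is diagonal, sending each basis state $\ket{z}=\ket{i}\ket{w}$ to $e^{i\pi c_z x_i}\ket{z}$ for a bit $c_z$ fixed by $z$ (one may take $c_z\equiv 1$, i.e.\ $O_x\ket{i}\ket{w}=(-1)^{x_i}\ket{i}\ket{w}$, if the oracle is already presented in phase form); for $i\notin\{1,\dots,n\}$ we let the oracle act as the identity on $\ket{z}$.

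Next I construct the \rgqbp \scrg with $t+1$ levels $Q_0,\dots,Q_t$, each consisting of $2^q$ nodes identified with the basis states $\{\ket{z}:z\in\{0,1\}^q\}$, so its width is $2^q$ and its length is $t$. The initial state is $\ket{v_0}=U_0\ket{0^q}$, a superposition over $Q_0$. For the $j$-th transition ($1\le j\le t$), at the level-$(j-1)$ node $\ket{z}$ with index part $i_z$ I set $L(\ket{z})=i_z$ (a dummy label $x_1$ when $i_z\notin\{1,\dots,n\}$), I let the $0$-transition vector $v^0_{\ket{z}}$ be the column of $U_j$ indexed by $\ket{z}$ -- that is, $\delta(\ket{z},0,\ket{z'})=\bra{z'}U_j\ket{z}$ -- and I let the $1$-transition vector be $v^1_{\ket{z}}=e^{i\theta_z}v^0_{\ket{z}}$ with $\theta_z=\pi c_z\in\{0,\pi\}$. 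Being a phase multiple of $v^0_{\ket{z}}$, this makes \scrg a genuine \rgqbp, and the induced operator $U^{O_x}_j=U_jO_x$ is unitary so $\delta$ is quantumly well-behaved. Exactly as in the state-evolution computation inside the proof of \cref{lemma:rgqbp-layer-split}, pushing $\ket{z}$ through the $j$-th transition produces $\sum_{z'}v^{x_{i_z}}_{\ket{z}}[z']\,\ket{z'}=e^{i\theta_z x_{i_z}}\,U_j\ket{z}=e^{i\pi c_z x_{i_z}}U_j\ket{z}=U_jO_x\ket{z}$, so by linearity the operator realised by that transition is exactly $U_jO_x$. Finally I take $F\subseteq Q_t$ to be the level-$t$ nodes whose basis states are accepting for $C$ (say, output qubit equal to $1$), with dummy labels on the remaining level-$t$ nodes.

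Correctness is then a one-line induction: if $\ket{\psi_j}$ is the state of \scrg after $j$ transitions then $\ket{\psi_0}=U_0\ket{0^q}$ and $\ket{\psi_j}=U^{O_x}_j\ket{\psi_{j-1}}=U_jO_x\ket{\psi_{j-1}}$, hence $\ket{\psi_t}=U_tO_x\cdots U_1O_xU_0\ket{0^q}=C\ket{0^q}$; measuring $\ket{\psi_t}$ in the standard basis and accepting iff the outcome lies in $F$ reproduces the acceptance probability of $C$ on every input, which is the claimed exact simulation. The steps that actually need care are the reduction to a diagonal (phase) oracle -- it is what keeps the width at $2^q$ and the query count at $t$ -- and the bookkeeping that absorbs $U_0$ into $\ket{v_0}$ and pairs each remaining $U_j$ with the preceding $O_x$; the core simulation identity is immediate from \cref{lemma:rgqbp-layer-split}, so I expect those two reductions, rather than any deep argument, to be the main obstacle.
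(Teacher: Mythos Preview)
Your proposal is correct and follows essentially the same route as the paper: decompose $C$ into the standard layered form with a phase oracle, identify the $2^q$ nodes of each level with computational-basis states, absorb $U_0$ into $\ket{v_0}$, let the $j$-th transition be the column of $U_j$ with a $\pm 1$ phase determined by the queried bit, and finish with a one-line induction showing the \rgqbp state equals the circuit state level by level. The only cosmetic differences are that the paper simply assumes a phase oracle from the outset (so every $\theta_z=\pi$) whereas you spell out the bit-to-phase conversion and allow $\theta_z\in\{0,\pi\}$, and the paper carries out the induction directly rather than citing \cref{lemma:rgqbp-layer-split}.
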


We first explain the reduction at an intuitive level before proceeding with the formal proof.
Let $C$ be a query circuit that makes $t$ queries and uses $q$ qubits. Then $C$ can be decomposed into input-independent unitary gates that act on $q$ qubits and input-dependant oracle gates that, w.l.o.g. act on the first $\log n$ qubits, as $C = U_t(O_X\otimes \mathbb{I}^{\otimes{q-\log n}})U_{t-1}\cdots U_1(O_X\otimes \mathbb{I}^{\otimes{q-\log n}})U_0$.
To understand the reduction better, consider an encoding between the set of basis states $\{\ket{0}, \ket{1}, \cdots, \ket{2^q-1}\}$ and the set of states of the branching program 
$\{\ket{v_{i,0}}, \ket{v_{i,1}}, \cdots, \ket{v_{i,2^q-1}}\}$ for each level $i$.

Now, the idea is to define the start state $\ket{v_0}$, the labelling function $L$ and the transition function such that if the state of the circuit after the unitary $U_i$ is $\ket{\psi_C} = \sum_{j\in [2^q]}\alpha_{i,j}\ket{j}$, then the state of the \rgqbp should be $\ket{\psi_R} = \sum_{j\in [2^q]} \alpha_{i,j}\ket{v_{i,j}}$.
To do this, we set the initial state of $R$ as the encoded state corresponding to the circuit state $U_0\ket{0^q}$ and define a labelling function and a transition function that mimics the operation $U_i(O_X\otimes \mathbb{I}^{\otimes q-\log n})$ at the $i^{th}$ transition.
Finally, if the query circuit accepts the input on observing a state $\ket{j}$ after measuring the final state in the computational basis, then we include the node $v_{l,j}$ to the accept set $F$ of the \rgqbp $R$.
We present the formal proof below.

\newcommand{\enc}{\mathcal{E}}
\newcommand{\invenc}{\mathcal{E}^{-1}}

\begin{proof}
    Let $C$ be a $q$-qubit $t$-query quantum query circuit with access to a standard phase oracle $O_X$ that acts as $O_X\ket{i} = (-1)^{X_i}\ket{i}$.
    {\it Wlog.}\ the action of $O_X$ can be restricted to the first $\log n$ qubits of $C$.
    It is standard to represent such a query circuit as $C = \tilde{U}_t (O_X \otimes \mathbb{I}^{\otimes q-\log n})\tilde{U}_{t-1}\cdots \tilde{U}_1 (O_X\otimes \mathbb{I}^{\otimes q-\log n}) \tilde{U_0}$. Let $F_C$ be the set of the basis states, which, when obtained post the final measurement, the circuit accepts the input.

    Now, construct a \rgqbp $R = (Q, E, \ket{v_0}, L, \delta, F)$ as follows. Set $Q = \{v_{i,j} : i\in \{0,1,\cdots, t\}, j\in \{0,1,\cdots, 2^q-1\}\}$ and $E = \{(v_{i,j}, v_{i+1,j'}) : i\in \{0,1,\cdots, t-1\}, j,j' \in \{0,1,\cdots, 2^q-1\}\}$.
    For each level $i$, consider the encoding $\enc_i$ that encodes a state in a circuit to a superposition of nodes in an \rgqbp as $\enc_i\big(\sum_{j\in [2^q]} \alpha_j \ket{j}\big) = \sum_{j\in [2^q]} \alpha_j \ket{v_{i,j}}$ for any $(\alpha_0, \alpha_1, \cdots, \alpha_{2^q-1})\in \mathbb{C}^{2^q}$. Note that this encoding is a bijection; we denote its inverse as $\invenc_i$.
    Let $\ket{v_0} = \enc_0(U_0\ket{0^q})$. Set $F$ such that $F$ contain nodes of the final level corresponding to the basis states in $F_C$, i,e, $F = \{\enc_t(j) : \ket{j}\in F_C\} = \{v_{t,j} : \ket{j}\in F_C\}$.
    Now, define the transition function $\delta$ as follows:
    \vspace{-0.7cm}
    \begin{equation*}
        \delta(v_{i,j}, a, v_{p, q}) = \begin{cases}
            \tilde{U}_{i}[q,j], & \text{if $a=0$ and $p = i+1$}\\
            -\tilde{U}_{i}[q,j], & \text{if $a=1$ and $p = i+1$}\\
            0, & \text{else}
        \end{cases}
    \end{equation*}
    Finally, set the variable assignment function $L$ as $L(v_{i,j}) = 
    j_{\lvert \log n}$ where $j_{\lvert \log n}$ denotes the first $\log n$ bits of the bitwise representation of $j$.
    Notice that for any node $v_{i,j}$, the transition vector corresponding to the $0$ transition is $w^0_{v_{i,j}} = [\tilde{U}_i[0,j], \tilde{U}_i[1,j], \cdots, \tilde{U}_i[2^q-1,j]]$ and that corresponding to the $1$ transition is $w^1_{v_{i,j}} = [-\tilde{U}_i[0,j], -\tilde{U}_i[1,j], \cdots, -\tilde{U}_i[2^q-1,j]]$.
    Clearly, with $w^1_{v_{i,j}} = - w^0_{v_{i,j}}$, one of the primary conditions of an \rgqbp is satisfied.

    Due to how we have defined the set of accept states $F$, it now suffices to show that the state of the \rgqbp $R$ after the final transition is analogous to the final state of the circuit $C$.
    Mathematically, we need to show that $\ket{\psi_f} = \enc_t(\ket{\phi_f})$ where $\ket{\psi_f}$ and $\ket{\phi_f}$ are the final states of the \rgqbp and the circuit, respectively.
    At any point in a \rgqbp, $i+1^{th}$ transition will be performed at the level $i$ of the \rgqbp, i.e., when the state of the \rgqbp is a superposition of the nodes in level $i$. For instance, $1^{st}$ transition is performed at level $0$, $2^{nd}$ transition at level $1$ and so on.
    At any level $i$, the unitary corresponding to the $i+1^{th}$ transition for an input $X$ can be given as $U^{O_X}_{i+1} = \big[ (w_{v_{i,0}}^{\gamma_{i,0}})^T (w_{v_{i,1}}^{\gamma_{i,1}})^T \cdots (w_{v_{i,2^q-1}}^{\gamma_{i,1}})^T\big]_{2^q\times 2^q}$ where $\gamma_{i,j} = X_{L(v_{i,j})}$ is the value obtained on querying the index corresponding to the node $v_{i,j}$ and $(w_{v_{i,j}}^{a})^T$ denotes the column vector corresponding to the $a$-transition at the node $v_{i,j}$.
    Since, $R$ is a \rgqbp, we have, $U^{O_X}_{i+1} = \big[ (-1)^{\gamma_{i,0}}(w_{v_{i,0}}^0)^T~~(-1)^{\gamma_{i,1}}(w_{v_{i,1}}^0)^T \cdots (-1)^{\gamma_{i,2^q-1}}(w_{v_{i,2^q-1}}^0)^T\big] = \big[(w_{v_{i,0}}^0)^T~~(w_{v_{i,1}}^0)^T \cdots (w_{v_{i,2^q-1}}^0)^T\big] \cdot diag\big((-1)^{\gamma_{i,1}}, (-1)^{\gamma_{i,1}}, \cdots, (-1)^{\gamma_{i,2^q-1}}\big)$ where $diag(\cdots)$ denotes the diagonal matrix.
    However, notice that the matrix $\big[(w_{v_{i,0}}^0)^T~~(w_{v_{i,1}}^0)^T \cdots (w_{v_{i,2^q-1}}^0)^T\big]$ is exactly $\tilde{U}_{i+1}$.
    Moreover, since $(-1)^{\gamma_{i,j}} = (-1)^{X_{L(v_{i,j})}} = (-1)^{X_{\hat{j}}}$ where $\hat{j} = j_{\lvert \log n}$ for any basis state $\ket{j}$ the matrix given by $diag\big((-1)^{\gamma_{i,1}}, (-1)^{\gamma_{i,1}}, \cdots, (-1)^{\gamma_{i,2^q-1}}\big) = D\text{ (say)}$ can be given as $D = diag\big((-1)^{X_{\hat{0}}}, (-1)^{X_{\hat{1}}}, \cdots, (-1)^{X_{\hat{j}}}\big)$.
    So, in a very loose sense, the operator $D$ adds a phase to $\ket{v_{i,j}}$ depending only on the first $\log n$ bits of $j$ that serve as the index address. 
    Clearly, $D$ mimics the operator $O_X \otimes \mathbb{I}^{\otimes q - \log n}$.
    Now, if we can show that $U_{i+1}^{O_X}$ indeed mimics the action of $\tilde{U}_{i+1}(O_X \otimes \mathbb{I}^{\otimes q - \log n})$, i.e., $\enc_{i+1}\big(\tilde{U}_{i+1}(O_X \otimes \mathbb{I}^{\otimes q - \log n})\ket{\phi_i}\big) = U_{i+1}^{O_x}\ket{\psi_i}$, then we are done.
    Let $\ket{\phi_i} = \sum_{j\in [2^q]}\alpha_{i,j}\ket{j}$ be the state of the circuit after the $i^{th}$ and let $\ket{\psi_i} = \enc_i(\ket{\phi_i}) = \sum_{j\in [2^q]}\alpha_{i,j}\ket{v_{i,j}}$ be the state of the \rgqbp at level $i$.
    We have established that $\ket{\psi_0} = \ket{v_0} = \enc_0(U_0\ket{0^q}) = \enc_0(\ket{\phi_0})$.

    Now, the action of $\tilde{U}_{i+1}$ and $O_X$ on the state $\ket{\phi_i}$ can be given as
    \begin{align*}
        \tilde{U}_{i+1}(O_X\otimes \mathbb{I}^{\otimes q-\log n})\ket{\phi_i} &= \tilde{U}_{i+1}(O_X\otimes \mathbb{I}^{\otimes q-\log n}) \sum_{j\in [2^q]} \alpha_{i,j}\ket{j}\\
        &= \tilde{U}_{i+1} \sum_{j\in [2^q]}(-1)^{X_{\hat{j}}}\alpha_{i,j}\ket{j}\\
        &= \sum_{j\in [2^q]} (-1)^{X_{\hat{j}}} \alpha_{i,j}\sum_{k\in [2^q]}w_{v_i,j}^0[k]\ket{k}\\
        &= \sum_{k\in [2^q]} \Bigg[ \sum_{j\in [2^q]} (-1)^{X_{\hat{j}}}\alpha_{i,j}w_{v_i,j}^0[k] \Bigg]\ket{k}
    \end{align*}
    So, we get
    \begin{align*}
        \enc_{i+1}\big(\tilde{U}_{i+1}(O_X\otimes \mathbb{I}^{\otimes q-\log n})\ket{\phi_i}\big) &= \enc_{i+1}\Bigg(\sum_{k\in [2^q]} \Bigg[ \sum_{j\in [2^q]} (-1)^{X_{\hat{j}}}\alpha_{i,j}w_{v_i,j}^0[k] \Bigg]\ket{k}\Bigg)\\
        &= \sum_{k\in [2^q]} \Bigg[ \sum_{j\in [2^q]} (-1)^{X_{\hat{j}}}\alpha_{i,j}w_{v_i,j}^0[k] \Bigg]\ket{v_{i+1,k}}
    \end{align*}

    Next, the action of the transition defined by the unitary $U_{i+1}^{O_X}$ on the state $\ket{\psi_i}$ of the \rgqbp $R$ is
    \begin{align*}
        U_{i+1}^{O_X}\ket{\psi_i} &= U_{i+1}^{O_X} \sum_{j\in [2^q]} \alpha_{i,j} \ket{v_{i,j}} = \sum_{j\in [2^q]} \alpha_{i,j} U_{i+1}^{O_X}\ket{v_{i,j}}\\
        &= \sum_{j\in [2^q]} \alpha_{i,j} \sum_{k\in [2^q]} (-1)^{\gamma_{i,j}}w_{v_{i,j}}^0[k] \ket{v_{i+1,k}} = \sum_{k\in [2^q]} \Bigg[\sum_{j\in [2^q]} (-1)^{\gamma_{i,j}}\alpha_{i,j}w_{v_{i,j}}^0[k] \Bigg]\ket{v_{i+1,k}}\\
    \end{align*}
    Here, $(-1)^{\gamma_{i,j}} = (-1)^{X_{L(v_{i,j})}}$. But we have defined $L$ such that $L(v_{i,j}) = j_{\lvert \log n} = \hat{j}$.
    So, we have $(-1)^{\gamma_{i,j}} = (-1)^{X_{\hat{j}}}$. This gives
    \begin{align*}
        U_{i+1}^{O_X}\ket{\psi_i} &= \sum_{k\in [2^q]} \Bigg[\sum_{j\in [2^q]} (-1)^{X_{\hat{j}}}\alpha_{i,j}w_{v_{i,j}}^0[k] \Bigg]\ket{v_{i+1,k}} = \enc_{i+1}\big(\tilde{U}_{i+1}(O_X\otimes \mathbb{I}^{\otimes q-\log n})\ket{\phi_i}\big).
    \end{align*}
    Since this is true for all levels $i$, we have $\ket{\psi_t} = \enc_t(\ket{\phi_t})$.
    This shows that the \rgqbp $R$ exactly computes the quantum circuit $C$.
\end{proof}

In addition to the reduction from a quantum query circuit to an \rgqbp, we also found a reduction from an \rgqbp to a quantum circuit.
The reduction is presented as Theorem~\ref{thm:circ-to-rgqbp}.

\begin{theorem}
\label{thm:circ-to-rgqbp}
    Any $w$-width $l$-length \rgqbp $R$ can be exactly simulated by a $\log(w)+\log(n)+1$-qubit $2l$-query quantum query circuit.
\end{theorem}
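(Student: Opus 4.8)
The plan is to invert the construction of \cref{thm:rgqbp-to-circ} while coping with two features of the \rgqbp model that were absent there: a single \rgqbp transition fuses a query-dependent step with a query-independent step, and the query-dependent step may multiply by an \emph{arbitrary} phase $e^{i\theta}$, whereas a quantum query circuit only gets the standard phase oracle $O_X:\ket i\mapsto(-1)^{X_i}\ket i$ on $\log n$ address qubits.

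First I would apply \cref{corr:rgqbp-alternate-layers} to replace $R$ by an equivalent \rgqbp $Q$ of width $w$ and length $2l$ whose transitions alternate between query-dependent and query-independent ones, the first being query-dependent. Then I would lay out the circuit on three registers: a \emph{state} register of $\log w$ qubits encoding the $\le w$ nodes of a level (padding with unreachable dummy nodes when $w$ is not a power of two), an \emph{address} register of $\log n$ qubits, and a single ancilla qubit. The initial state is prepared by a query-free unitary that puts the encoding of $\ket{v_0}$ on the state register with the other two registers in $\ket 0$. By the ``quantumly well-behaved'' condition every query-independent transition of $Q$ is a fixed unitary on the $w$-dimensional state space, hence a query-free gate. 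At the end one measures the state register and accepts iff the outcome encodes a node of $F$; because the address and ancilla registers are reset to $\ket 0$ after each query-dependent transition (below), this reproduces the acceptance probability of $Q$, and hence of $R$, exactly.

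The heart of the argument is simulating one query-dependent transition. In the alternating form this transition is, on the state register, the query-controlled diagonal $\mathrm{diag}\big(e^{i\theta_m X_{\tau_m}}\big)_m$ followed by a fixed relabelling, where $\tau_m=L(m)$. To implement the diagonal with the $\pm1$-valued oracle I would: (i) with a query-free unitary, controlled on the state register holding $\ket m$, write $\tau_m$ into the address register; (ii) spend one (controlled) oracle call to extract $X_{\tau_m}$ into the ancilla; (iii) with a query-free gate, multiply by $e^{i\theta_m}$ conditioned on the state register holding $\ket m$ and the ancilla holding $\ket 1$; (iv) spend a second oracle call to erase the ancilla; (v) uncompute step (i) to clear the address register. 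This is $2$ queries per query-dependent layer; since $Q$ has $l$ such layers the circuit makes $2l$ queries on $\log w+\log n+1$ qubits, as claimed.

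I expect the main obstacle to be exactly steps (ii)--(iv): turning the node-dependent phase $e^{i\theta_m}$ into something buildable from the $\pm1$-valued standard oracle. A lone uncontrolled $O_X$ call on an address register holding $\ket{\tau_m}$ can only contribute a global $(-1)^{X_{\tau_m}}$ --- which would force $\theta_m=\pi$, the only case that arose in \cref{thm:rgqbp-to-circ} --- so the general phase genuinely requires first materialising $X_{\tau_m}$ as a classical bit in the ancilla, applying the controlled phase, and then spending another query to erase it; this is precisely the source of the $2l$ (rather than $l$) query count. The rest --- verifying bijectivity of the encoding, that the dummy padding is never reached, and that all auxiliary registers are disentangled at the end so the final measurement is exact --- is routine bookkeeping parallel to the proof of \cref{thm:rgqbp-to-circ}.
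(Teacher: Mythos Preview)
Your proposal is correct and follows essentially the same approach as the paper. The paper works directly on the original length-$l$ \rgqbp (without first invoking \cref{corr:rgqbp-alternate-layers}) and, for each of its $l$ transitions, performs exactly your steps (i)--(v) --- write the label into an address register, call the oracle to load $X_{\tau_m}$ into an ancilla, apply the node-dependent phase controlled on that ancilla, call the oracle again to erase, and uncompute the label --- followed by the query-independent unitary $\hat{U}_p$; your preliminary application of the corollary simply makes this decomposition explicit up front rather than inside the loop, and the resulting circuit and $2l$ query count are identical.
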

\begin{proof}
    Let $R = (Q,E,\ket{v_0},L,\delta,F)$ be an ($w,l$)-\rgqbp that we would like to simulate.
    For the mechanics of the proof, we define three families of unitaries - one each corresponding to the query-independent transitions, labelling function and phase inclusion.
    \begin{itemize}
        \item First, define a family of $\log(w)$-qubit unitaries $\{\hat{U}_p\}_{p\in \{1,\cdots, l\}}$ where $U_p$ is defined as $\hat{U}_p\ket{q} = \sum_{q'\in [w]}\delta(v_{p,q}, 0, v_{p+1,q'})\ket{q'}$, for any $q\in [w]$.
        Now, let $\ket{v_0} = \sum_{q\in [w]}\alpha_{0,q}\ket{v_{0,q}}$ be the initial state of the \rgqbp $R'$.
        Then, define $\hat{U}_0$ such that $\hat{U}_0\ket{0} = \sum_{q'\in [w]}\alpha_{0,q'}\ket{q'}$.
        The unitary nature of $\hat{U}_p$ results from the fact that the transition function $\delta$ is quantumly well-behaved.
        \item Secondly, define a family of $\log w + \log n$-qubit labelling unitaries $\{\hat{LU}_p\}_{p\in \{1,\cdots, l\}}$ where the unitaries are defined as $\hat{LU}_p\ket{q}\ket{0^{\otimes \log n}} = \ket{q}\ket{L(v_{p,q})}$ where $L$ is the labelling function of the \rgqbp $R'$.
        \item Finally, define a family of $\log w +\log n + 1$-qubit phase inclusion unitaries $\{\hat{PU}_p\}_{p\in \{1,2,\cdots, l\}}$ that act as $\hat{PU}_p\ket{q}\ket{k}\ket{a} = e^{ia\theta_{p,q}}\ket{q}\ket{k}\ket{a}$.
    \end{itemize}
    Now, let $\enc_p$ be the encoding defined in the proof of Theorem~\ref{thm:circ-to-rgqbp} that encodes a state of a circuit to a state of an \rgqbp as $\enc_p\big(\sum_{q\in [w]}\alpha_q\ket{q}\big) = \sum_{q\in [w]} \alpha_q \ket{v_{p,q}}$ for any $(\alpha_0,\alpha_1,\cdots, \alpha_{s-1})\in \mathbb{C}^{w}$ for any $p$.
    Define a set $F_C$ of accept states as $F_C = \{\ket{q} : \enc_t(\ket{q})\in F\}$.
    With these families of unitaries and the set of accept states $F_C$, we can simulate \rgqbp $R'$ using circuit $C$ as in Algorithm~\ref{algo:rgqbp-to-circ}.
    We now show the correctness of the simulation algorithm.
    \begin{algorithm}[!ht]
    \caption{Simulating \rgqbp $R'$ in circuit $C$ \label{algo:rgqbp-to-circ}}
    \begin{algorithmic}[1]
        \Require Oracle $O_x$, unitaries $\{\hat{U}_p\}_p, \{\hat{LU}_p\}_p$ and $\{\hat{PU}_p\}_p$, and the set of accept states $F_C$.
        \Require Registers $R_1$, $R_2$ and $R_3$ of size $\log w, \log n$ and $1$ all set to $\ket{0}$. 
        \State Apply $\hat{U}_0$ on $R_1$.
        \For{$p$ from $1$ to $l$}
            \State Apply $\hat{LU}_p$ on $R_1$ and $R_2$.
            \State Apply $O_x$ on $R_2$ and $R_3$.
            \State Apply $\hat{PU}_p$ on $R_1, R_2$ and $R_3$.
            \State Apply $O_x$ on $R_2$ and $R_3$.
            \State Apply $\hat{LU}_p^{\dagger}$ on $R_1$ and $R_2$.
            \State Apply $\hat{U}_p$ on $R_1$.
        \EndFor
        \State Measure $R_1$ in the computational basis. If the outcome belongs in $F_C$, then accept. Else, reject.
    \end{algorithmic}
\end{algorithm}

    Let $C$ be a circuit with $\log w + \log n + 1 $ qubits. Let $\ket{\phi_{p}}$ be the state of the circuit before the $p+1^{th}$ iteration in Algorithm~\ref{algo:rgqbp-to-circ}.
    Before the first iteration, we have 
    \begin{equation*}
        \ket{\phi_0} = \hat{U}_0\ket{0}\ket{0}\ket{0}
        = \sum_{q\in [w]}\alpha_{0,q}\ket{q}\ket{0}\ket{0}
    \end{equation*}
    Clearly, $\enc_0(\ket{\phi_0}) = \ket{v_0}$.
    Now, consider the $p+1^{th}$ iteration. Let the state of the circuit $C$ before the $p+1^{th}$ iteration be $\ket{\phi_p} = \sum_{q\in [w]}\alpha_{p,q}\ket{q}\ket{0}\ket{0}$.
    Then, the evolution of the circuit during the $p+1^{th}$ iteration can be given as follows:
    \begin{align*}
        \ket{\phi_p} &\xrightarrow{\hat{LU}_p} \sum_{q\in [w]}\alpha_{p,q}\ket{q}\ket{L(v_{p,q})}\ket{0}
        \xrightarrow{O_X} \sum_{q\in [w]}\alpha_{p,q}\ket{q}\ket{L(v_{p,q})}\ket{\gamma_{p,q}}\\
        &\xrightarrow{\hat{PU}_p} \sum_{q\in [w]}e^{i\gamma_{p,q}\theta_{p,q}} \alpha_{p,q}\ket{q}\ket{L(v_{p,q})}\ket{\gamma_{p,q}}
        \xrightarrow{O_X} \sum_{q\in [w]}e^{i\gamma_{p,q}\theta_{p,q}} \alpha_{p,q}\ket{q}\ket{L(v_{p,q})}\ket{0}\\
        &\xrightarrow{\hat{LU}_p^{\dagger}} \sum_{q\in [w]}e^{i\gamma_{p,q}\theta_{p,q}} \alpha_{p,q}\ket{q}\ket{0}\ket{0} \xrightarrow{\hat{U}_p} \sum_{q\in [w]}e^{i\gamma_{p,q}\theta_{p,q}} \alpha_{p,q} 
        \hat{U}_p\ket{q}\ket{0}\ket{0}\\
        &= \sum_{q\in [w]}e^{i\gamma_{p,q}\theta_{p,q}} \alpha_{p,q} 
        \sum_{q'\in [w]} \delta(v_{p,q}, 0, v_{p+1,q'})\ket{q'}\ket{0}\ket{0}\\
        &=  \sum_{q'\in [w]} \Bigg[\sum_{q\in [w]}e^{i\gamma_{p,q}\theta_{p,q}} \alpha_{p,q} \delta(v_{p,q}, 0, v_{p+1,q'})\Bigg]\ket{q'}\ket{0}\ket{0}
    \end{align*}
    But since $R$ is an \rgqbp, we know that $\delta(v_{p,q}, 1, v_{p+1, q'}) = e^{i\theta_{p,q}}\delta(v_{p,q}, 1, v_{p+1, q'})$ from the definition of \rgqbp.
    This gives that $e^{i\gamma_{p,q}\theta_{p,q}}\delta(v_{p,q}, 0, v_{p+1, q'}) = \delta(v_{p,q}, \gamma_{p,q}, v_{p+1, q'})$. So, we have,
    \begin{align}
        \ket{\phi_{p+1}} &= \sum_{q'\in [w]} \Bigg[\sum_{q\in [w]}e^{i\gamma_{p,q}\theta_{p,q}} \alpha_{p,q} \delta(v_{p,q}, 0, v_{p+1,q'})\Bigg]\ket{q'}\ket{0}\ket{0}\notag\\
        &= \sum_{q'\in [w]} \Bigg[\sum_{q\in [w]}\alpha_{p,q} \delta(v_{p,q}, \gamma_{p,q}, v_{p+1,q'})\Bigg]\ket{q'}\ket{0}\ket{0}\label{eqn:circ-p+1}
    \end{align}

    Now, consider the $p+1^{th}$ transition in \rgqbp $R'$. Let $\ket{\psi_{p}} = \sum_{q\in [w]}\alpha_{p,q}\ket{v_{p, q}}$ be the state of $R$ just before the $p+1^{th}$ transition. Then, the state of $R$ after the $p+1^{th}$ transition can be given as 
    \begin{align}
        U^{O_X}_{p+1}\ket{\psi_{p}} &= U^{O_X}_{p+1}\sum_{q\in [w]} \alpha_{p, q}\ket{v_{p, q}}
        = \sum_{q\in [w]} \alpha_{p, q} ~U^{O_X}_{p+1}\ket{v_{p, q}}\notag\\
        &= \sum_{q\in [w]} \alpha_{p, q} \sum_{q'\in [w]}\delta(v_{p,q}, \gamma_{p,q}, v_{p+1,q'})\ket{v_{p+1, q}}\notag\\
        &= \sum_{q'\in [w]} \Bigg[\sum_{q\in [w]} \alpha_{p, q} ~\delta(v_{p,q}, \gamma_{p,q}, v_{p+1,q'})\Bigg]\ket{v_{p+1, q}} = \ket{\psi_{p+1}}\label{eqn:rgqbp-p+1}
    \end{align}
    where $\gamma_{p,q} = X_{L(v_{p,q})}$ is the value obtained on querying the index corresponding to the node $v_{p,q}$.
    If we call the state of the first register of Equation~\ref{eqn:circ-p+1} as $\ket{\phi'_{p+1}}$, then comparing it with Equation~\ref{eqn:rgqbp-p+1}, we can see that 
    \begin{equation*}
        \enc_{p+1}\big(\ket{\phi'_{p+1}}\big) = \ket{\psi_{p+1}}.
    \end{equation*}
    This is true for any $p\in \{0,1,\cdots, l-1\}$. So, we have $\enc_l\big(\ket{\phi'_l}\big) = \ket{\psi_l}$.
    Finally, on measuring $\ket{\phi'_l}$, the circuit accepts the input if and only if the measurement outcome, say $\ket{m}\in F_C$. This implies that the circuit accepts the input if and only if $\enc_l(\ket{m}) = \ket{v_{l,m}}\in F$.
    Hence, the circuit $C$ accepts an input if and only if the \rgqbp $R$ accepts the input. Moreover, since the amplitude of $\ket{m}$ in the circuit $C$ is the same as the amplitude of $\enc_l(\ket{m}) = \ket{v_{l,m}}$, $C$ accepts with the same probability as that of $R$, and this simulates $R$ exactly.  
\end{proof}
\section{Query-Space Lower Bound for Unordered Search}
\label{sec:search-lowerbound}

The unordered search problem was one of the earliest problems for which a query lower bound was proved and one of the first problems to admit a quantum speedup. 
We follow suit and present a query-space lower bound for the unordered search in the \rgqbp mode by showing a lower bound on a related problem, which we call the Promise-OR problem. 
We define the Promise-OR problem below.
\begin{problem}[Promise-OR Problem]
    Given access to an $n$-bit Boolean string $X$ with a promise that at most one of the $n$ positions has a $1$, decide if $X$ has a $1$. 
\end{problem}

\begin{figure}[b]
    \centering
    \includegraphics[width=0.8\textwidth]{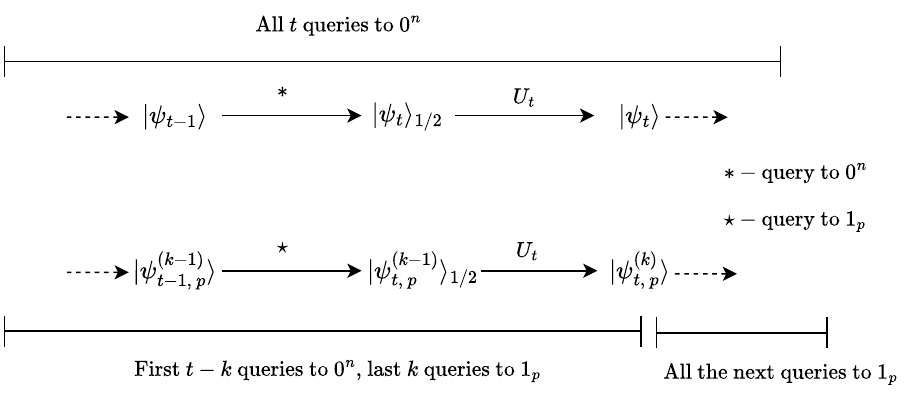}
    \caption{The state transitions using the hybrid argument.}
    \label{fig:state-transition}
\end{figure}

\lowerboundor*

\begin{proof}
   To prove this theorem, we follow the lines of the well-known hybrid argument technique for lower bounds.    
    Let $R$ be an ($s,L$)-\rgqbp that solves the Promise-OR problem, and let $R'$ be the equivalent ($s, 2L$)-\rgqbp obtained using Corollary~\ref{corr:rgqbp-alternate-layers}. We will analyse two evolutions of $R'$. The first is on the input $0^n$. The second will be a hybrid evolution, where some initial transitions will be using $0^n$ as the input string, and the rest of the transitions will be using $0^{p-1}10^{n-p}$, denoted $1_p$, as the input string. The intuition would be to show that the corresponding states after each transition in these two evolutions have a bounded difference, and unless there are lots of transitions, the final states would not be sufficiently distinguishable for input strings $0^n$ and $1_p$. We require a few notations to handle the technicalities.

    Let $U_t$ denote the unitary corresponding to the $t^{th}$ query-independent transition and $V_t^{X}$ denote the unitary corresponding to the $t^{th}$ query-dependent transition of $R'$ when then input is $X$.
    Let $\ket{\psi_t}_{1/2}$ and $\ket{\psi_t}$ be the states of the branching program after the $t^{th}$ query-independent and $t^{th}$ query-independent transitions, respectively, when querying the input $0^n$.
    Similarly, let $\ket{\psi_{t,p}^{(k)}}_{1/2}$ and $\ket{\psi_{t,p}^{(k)}}$ be the states of the branching program after the $t^{th}$ query-dependent and $t^{th}$ query-independent transitions, respectively, when the first $t-k$ queries are made to $0^n$ and the last $k$ queries are made to $0^{p-1}10^{n-p}$, denoted $1_p$; these denote the states in a hybrid run of $R'$.
    Note that unitary $U_t$ corresponding to the $t^{th}$ query-independent transition is such that $\ket{\psi_t} = U_t\ket{\psi_t}_{1/2}$ and $\ket{\psi_{t,p}^{(k)}} = U_t\ket{\psi_{t,p}^{(k)}}_{1/2}$.

    The objective is to bound the difference between the states $\ket{\psi_L}$ (final state of the non-hybrid evolution, on input $0^n$) and $\ket{\psi_{L,p}^{(L)}}$ (final state of the hybrid evolution, but all queries are to input $1_p$).
    
    Suppose we represent $\ket{\psi_t}_{1/2}$ as $\displaystyle \ket{\psi_t}_{1/2} = \sum_{j\in [s]}\alpha_{j,t}\ket{j}$. Let $Q_{p,t}$ be the set of nodes that query the index $p$ after the $t^{th}$ query-dependent transition.
    Then we can write 
    \begin{equation*}
        \ket{\psi_{t,p}^{(1)}}_{1/2} = \sum_{j\in [s] : j\notin Q_{p,t}}\alpha_{j,t}\ket{j} + \sum_{j\in Q_{p,t}} e^{i\theta_{j,t}}\alpha_{j,t}\ket{j}.
    \end{equation*}

    This allows us to compare two evolutions in which only the $t$-th query was made to different strings, $0^n$ and $1_p$, and all the earlier ones being made to $0^n$ for both.
    \begin{align*}
        \norm{\ket{\psi_t} - \ket{\psi_{t,p}^{(1)}}} &= \norm{U_t \ket{\psi_t}_{1/2} - U_t\ket{\psi_{t,p}^{(1)}}_{1/2}} = \norm{\ket{\psi_t}_{1/2} - \ket{\psi_{t,p}^{(1)}}_{1/2}}\\
        &= \norm{\Bigg[\sum_{j\in [s]}\alpha_{j,t}\ket{j}\Bigg] - \Bigg[\sum_{j\in [s] : j\notin Q_{p,t}}\alpha_{j,t}\ket{j} + \sum_{j\in Q_{p,t}} e^{i\theta_{j,t}}\alpha_{j,t}\ket{j}\Bigg]}\\
        &= \norm{\sum_{j\in Q_{p,t}} (1-e^{i\theta_{j,t}})\alpha_{j,t}\ket{j}} \le 2\norm{\sum_{j\in Q_{p,t}}\alpha_{j,t}\ket{j}}.
    \end{align*}

    \noindent Now we compare the final states of the evolutions in which only the final two queries differ.
    \begin{align*}
        \ket{\psi_L} - \ket{\psi_{L,p}^{(2)}} &= \ket{\psi_L} - U_L V_L^{1_p}\ket{\psi_{L-1}} + U_L V_L^{1_p}\ket{\psi_{L-1}} - \ket{\psi_{L,p}^{(2)}}\\
        &= \ket{\psi_L} - U_L V_L^{1_p}\ket{\psi_{L-1}} + U_L V_L^{1_p}\ket{\psi_{L-1}} -  U_L V_L^{1_p}\ket{\psi_{L-1,p}^{(1)}}\\
        &= \Big[\ket{\psi_L}-\ket{\psi_{L,p}^{(1)}}\Big] + U_L V_L^{(1_p)} \Big[\ket{\psi_{L-1}} - \ket{\psi_{L-1,p}}\Big].
    \end{align*}
    We can extend the above using induction and telescoping to show that 
    \begin{align*}
        \ket{\psi_L} - \ket{\psi_{L,p}^{(L)}} &= \sum_{t=L}^0 ~\Bigg(\prod_{l=t+1}^L U_l V_l^{(1_p)}\Bigg)~ \Big[\ket{\psi_{t}} - \ket{\psi_{t,p}}\Big].
    \end{align*}
    We can bound their distance as well.
    \begin{align*}
        \norm{\ket{\psi_L} - \ket{\psi_{L,p}^{(L)}}} &= \norm{\sum_{t=L}^0 ~\Bigg(\prod_{l=t+1}^L U_l V_l^{(1_p)}\Bigg)~ \Big[\ket{\psi_{t}} - \ket{\psi_{t,p}}\Big]}\\
        &\le \sum_{t=L}^0 \norm{\Bigg(\prod_{l=t+1}^L U_l V_l^{(1_p)}\Bigg)~ \Big[\ket{\psi_{t}} - \ket{\psi_{t,p}}\Big]}\\
        &= \sum_{t=L}^0 \norm{\prod_{l=t+1}^L U_l V_l^{(1_p)}} ~\norm{\Big[\ket{\psi_{t}} - \ket{\psi_{t,p}}\Big]}\\
        &= \sum_{t=0}^L \norm{\Big[\ket{\psi_{t}} - \ket{\psi_{t,p}}\Big]} \le \sum_{t=0}^L 2 \norm{\sum_{j\in Q_{p,t}} \alpha_{j,t}\ket{j}} \le 2 \sum_{t=0}^L \sum_{j\in Q_{p,t}}\abs{\alpha_{j,t}}
    \end{align*}
    
    Then, taking an expectation over a uniform distribution on $p$, we get
    \begin{align*}
        \expec_p \bigg[\norm{\ket{\psi_L} - \ket{\psi_{L,p}^{(L)}}}\bigg] &\le \expec_p\bigg[2 \sum_{t=0}^L \sum_{j\in Q_{p,t}}\abs{\alpha_{j,t}}\bigg]= \frac{2}{n} \sum_{p=1}^n\sum_{t=0}^L \sum_{j\in Q_{p,t}}\abs{\alpha_{j,t}}= \frac{2}{n}\sum_{t=0}^L \sum_{p=1}^n \sum_{j\in Q_{p,t}}\abs{\alpha_{j,t}}.
    \end{align*}
    Since one node queries exactly one index, each node will belong to $Q_{p,t}$ for some $p$ in each $t$. So, we have $\sum_{p=1}^n \sum_{j\in Q_{p,t}}\norm{\alpha_{j,t}} = \sum_{j\in [s]} \abs{\alpha_{j,t}}$. This gives,
    \begin{align*}
        \expec_p \bigg[\norm{\ket{\psi_L} - \ket{\psi_{L,p}^{(L)}}}\bigg] &= \frac{2}{n}\sum_{t=0}^L \sum_{j\in [s]} \abs{\alpha_{j,t}}.
    \end{align*}
    To maximize the value of $\sum_{j\in [s]} \abs{\alpha_{j,t}}$ for any fixed $t$ conditioned on the fact that $\sum_{j\in [s]} \abs{\alpha_{j,t}}^2 = 1$, we set $\abs{\alpha_{j,t}} = \frac{1}{\sqrt{s}}$ for all $j$.
    So, we get
    \begin{align}
        \expec_p \bigg[\norm{\ket{\psi_L} - \ket{\psi_{L,p}^{(L)}}}\bigg] &= \frac{2}{n}\sum_{t=0}^L \sum_{j\in [s]} \abs{\alpha_{j,t}} \le \frac{2}{n}\sum_{t=0}^L \sum_{j\in [s]} \frac{1}{\sqrt{s}} = \frac{2(L+1)\sqrt{s}}{n} \label{proof:expec_1}. 
    \end{align}

    But to be able to successfully distinguish between the inputs $X=0^n$ and $X=1^p$ for any $p$, we need that $\norm{\ket{\psi_L} - \ket{\psi_{L,p}^{(L)}}} = \Omega(1)$.
    Again, taking an expectation over the uniform distribution of $p$, we get
    \begin{align}
        \expec_p \bigg[\norm{\ket{\psi_L} - \ket{\psi_{L,p}^{(L)}}}\bigg] &= \Omega(1) \label{proof:expec_2}.
    \end{align}
    Comparing (\ref{proof:expec_1}) and (\ref{proof:expec_2}), we get $\tfrac{2(L+1)\sqrt{s}}{n} = \Omega(1)$, or $L = \Omega\bigg(\frac{n}{\sqrt{s}}\bigg)$.
\end{proof} 
\section{Query-Space Lower Bound for ($k,k+\delta$) Hamming Decision Problem}
\label{sec:k-decision-lowerbound}

\newcommand{\khamprob}{($k,k+\delta$) Hamming decision problem\xspace}

Following the lower bound obtained for the Promise-OR problem, we extend a similar query-space lower bound for the ($k,k+1$) Hamming decision problem.
We first formally define the \khamprob.
\begin{problem}[($k,k+\delta$) Hamming Decision Problem]
    Given access to an $n$-bit Boolean input $X$ with the promise that the Hamming weight of $X$ is either $k$ or $k+\delta$ for some $\delta< k$, decide if the Hamming weight of $X$ is $k$ or $k+\delta$.
\end{problem}

The \khamprob is an interesting problem to consider. For any symmetric function, there exists some fixed $k$ such that ($k,k+1$) Hamming decision problem can be reduced to that particular symmetric function.
For instance, when $k=n/2$, then ($k,k+1$) Hamming decision problem can be reduced to problems like Parity and Majority and for any fixed $k$, ($k,k+1$) Hamming decision problem can be reduced to the threshold-$k$ function.
So, a lower bound on the \khamprob essentially provides a lower bound for all the symmetric functions.

\hammingdecision*

The proof of Theorem~\ref{thm:lowerbound-khamprob} follows a pattern similar to the proof of Theorem~\ref{thm:lowerbound-or}.
The main difference here is that in the proof of the earlier, query-dependent transitions did not affect the state of the branching program when the input is $X=0^n$.
However, here, the query-dependent layer could potentially change the state of the branching program irrespective of the input.
The structure of the proof is still the same as that of the earlier.
We first use hybrid arguments to bound the norm of the difference of the states obtained when input to the \rgqbp is a yes instance and a no instance.
Then, fixing a yes instance, we take the expectation value of this norm over a uniform distribution on a set of hard no instances.
Finally, we note that the expectation value has to be $\Omega(1)$ in order to differentiate between the yes instance and the no instance with a constant probability.
While the last step is easy, the first two steps are slightly complicated.
We present the proof of Theorem~\ref{thm:lowerbound-khamprob} below.

\begin{proof}[Proof of Theorem~\ref{thm:lowerbound-khamprob}]
    
    Let $x$ correspond to an input such that $|x| = k$ and $y$ correspond to an input such that $|y| = k+1$.
    Let $R$ be a ($s,L$)-\rgqbp that solves the \khamprob and let $R'$ be its dual ($s,2L$)-\rgqbp obtained from Corollary~\ref{corr:rgqbp-alternate-layers}.
    We introduce a few notations required for the proof similar to that in the proof of Theorem~\ref{thm:lowerbound-or}.
    Let $\ket{\psi_{t,w}}_{1/2}$ and $\ket{\psi_{t,w}}$ be the state of the branching program after the $t^{th}$ query-dependent and $t^{th}$ query-independent transitions, respectively, when the input is $w\in \{x,y\}$.
    Let $\ket{\psi_{t,y}^{(r)}}_{1/2}$ and $\ket{\psi_{t,y}^{(r)}}$ denote the states of the branching program after the $t^{th}$ query-dependent and 
    $t^{th}$ query-independent transitions, respectively, where the first $t-k$ query-dependent transitions were made to the input $x$ and the last $k$ query-dependent transitions were made to the input $y$.
    Notice that $\ket{\psi_{t,y}^{(t)}}_{1/2} = \ket{\psi_{t,y}}_{1/2}$ and $\ket{\psi_{t,y}^{(t)}} = \ket{\psi_{t,y}}$, i.e, in the hybrid run, when the last $t$ query-dependent transitions out of a total $t$ transitions are made to the input $y$, then it is the same as evaluating the \rgqbp $R'$ on the input $y$.

    Our ultimate objective here is to upper bound the difference between the states $\ket{\psi_{t,L}}$ and $\ket{\psi_{t,L}}$.
    Now, let the state just before the $t^{th}$ query dependent level when the input is $x$ be denoted as
    \begin{equation*}
        \ket{\psi_{t-1,x}} = \sum_{j\in [s]}\alpha_{j,t}\ket{j}.
    \end{equation*}
    The state obtained when the next query is made to $y$ is
    \begin{equation*}
        \ket{\psi_{t,y}^{(1)}}_{1/2} = \sum_{j\in [s]: j\notin Q_{y,t}} \alpha_{j,t} \ket{j} + \sum_{j\in [s]: j\in Q_{y,t}} e^{i\theta_{j,t}}\alpha_{j,t} \ket{j},
    \end{equation*}
    where $Q_{y,t}$ is the set of nodes at the $t^{th}$ query-dependent level that queries an index $p$ such that $y_p=1$.
    Similarly, the state of the branching program, when the $t^{th}$ query was made to $x$, will be
    \begin{equation*}
        \ket{\psi_{t,x}}_{1/2} = \sum_{j\in [s]: j\notin Q_{x,t}} \alpha_{j,t} \ket{j} + \sum_{j\in [s]: j\in Q_{x,t}} e^{i\theta_{j,t}}\alpha_{j,t} \ket{j},
    \end{equation*}
    Using these two expressions, we can bound the difference between these states as
    \begin{align*}
        \norm{\ket{\psi_{t,x}} - \ket{\psi_{t,y}^{(1)}}} &= \norm{U_t\ket{\psi_{t,x}}_{1/2} - U_t\ket{\psi_{t,y}^{(1)}}_{1/2}} = \norm{U_t}\norm{\ket{\psi_{t,x}}_{1/2} - \ket{\psi_{t,y}^{(1)}}_{1/2}}\\
        &= \norm{\ket{\psi_{t,x}}_{1/2} - \ket{\psi_{t,y}^{(1)}}_{1/2}}\\
        &= \Bigg\lvert\Bigg\lvert\Bigg[\sum_{j\in [s]: j\notin Q_{x,t}} \alpha_{j,t} \ket{j} + \sum_{j\in [s]: j\in Q_{x,t}} e^{i\theta_{j,t}}\alpha_{j,t} \ket{j}\Bigg] \\
        &\hspace{5em}- \Bigg[\sum_{j\in [s]: j\notin Q_{y,t}} \alpha_{j,t} \ket{j} + \sum_{j\in [s]: j\in Q_{y,t}} e^{i\theta_{j,t}}\alpha_{j,t} \ket{j}\Bigg]\Bigg\rvert\Bigg\rvert\\
        &= \norm{\sum_{j\in Q_{x,t}~\&~j\notin Q_{y,t}} (e^{i\theta_{j,t}} - 1)\alpha_{j,t}\ket{j} + \sum_{j\notin Q_{x,t}~\&~j\in Q_{y,t}} (1-e^{i\theta_{j,t}})\alpha_{j,t}\ket{j}}\\
        &\le \sum_{j\in \Delta(x,y,t)} \norm{(1- e^{i\theta_{j,t}})\alpha_{j,t}\ket{j}}\\
        &\le 2 \sum_{j\in \Delta(x,y,t)} \norm{\alpha_{j,t}\ket{j}} = 2 \sum_{j\in \Delta(x,y,t)} \abs{\alpha_{j,t}},
    \end{align*}
    where $\Delta(x,y,t)$ is the set of all nodes that query an index $p$ such that $x_p\neq y_p$  in the $t^{th}$ query-dependent level.

    Similar to the proof of Theorem~\ref{thm:lowerbound-or}, it is easy to show that
    \begin{equation*}
        \norm{\ket{\psi_{L,x}} - \ket{\psi_{L,y}}} = \norm{\ket{\psi_{L,x}} - \ket{\psi_{L,y}^{(L)}}} \le 2\sum_{t=0}^L \sum_{j\in \Delta(x,y,t)} \abs{\alpha_{j,t}}.
    \end{equation*}
    We consider two cases --- $k\le n/2$ and $k > n/2$.

    \textbf{Case 1: $k\le n/2$.} Now, fix the input $x$ such that $k=|x|\le n/2$. Let $\mathscr{R}$ be a set of inputs $y$ such that $x_i = y_i$ whenever $x_i=1$ and out of the remaining $n-k$ bit positions where $x_i = 0$, $y$ differs from $x$ at exactly $\delta$ positions.
    Then, the number of elements in $\mathscr{R}$ will be exactly $\binom{n-k}{\delta}$.
    Taking an expectation over a uniform distribution of $y$ in $\mathscr{R}$, we get
    \begin{align*}
        \expec_\mathscr{R} \Big[ \norm{\ket{\psi_{L,x}} - \ket{\psi_{L,y}}}\Big]
        &\le  \expec_\mathscr{R}\Big[ 2\sum_{t=0}^L \sum_{j\in \Delta(x,y,t)} \abs{\alpha_{j,t}} \Big]\\
        &= \frac{2}{\binom{n-k}{\delta}} \sum_{y\in \mathscr{R}} \sum_{t=0}^L \sum_{j\in \Delta(x,y,t)} \abs{\alpha_{j,t}} = \frac{2}{\binom{n-k}{\delta}} \sum_{t=0}^L \sum_{y\in \mathscr{R}} \sum_{j\in \Delta(x,y,t)} \abs{\alpha_{j,t}}
    \end{align*}
    Notice the last two summations of the last equation. Take any index $i\in [n]$. Then, the number of $y\in \mathscr{R}$, such that $x_i\neq y_i$ is either $0$ (for $i$ such that $x_i=1$) or exactly $\binom{n-k-1}{\delta-1}$.
    So, for any $j\in [s]$, the number of $y\in \mathscr{R}$ such that $j\in \Delta(x,y,t)$ is either $0$ or $\binom{n-k-1}{\delta-1}$.
    This is because of the fact that $x$ and $y$ differ exactly at $\delta$ positions, and these are positions where $x_i=0$.
    So, we get
    \begin{equation*}
        \sum_{y\in\mathscr{R}}\sum_{j\in \Delta(x,y,t)} \abs{\alpha_{j,t}} 
        ~\le~ \binom{n-k-1}{\delta-1} \sum_{{j\in [s] :\atop j\in \Delta(x,y,t)}\atop\text{~for some }y\in \mathscr{R}}\abs{\alpha_{j,t}} 
        ~\le~ \binom{n-k-1}{\delta-1} \sum_{j\in [s]} \abs{\alpha_{j,t}}
    \end{equation*}
    Using this in the expression of expectation, we get
    \begin{align*}
        \expec_\mathscr{R} \Big[ \norm{\ket{\psi_{L,x}} - \ket{\psi_{L,y}}}\Big] &= \frac{2}{\binom{n-k}{\delta}} \sum_{t=0}^L \sum_{y\in \mathscr{R}} \sum_{j\in \Delta(x,y,t)} \abs{\alpha_{j,t}} \\
        &\le \frac{2\binom{n-k-1}{\delta-1}}{\binom{n-k}{\delta}} \sum_{t=0}^L \sum_{j\in [s]}\abs{\alpha_{j,t}}
        ~~\le  \frac{2\delta}{n-k} \sum_{t=0}^L \sum_{j\in [s]}\abs{\alpha_{j,t}}
    \end{align*}
    Since, $\sum_{j\in [s]} \abs{\alpha_{j,t}}^2 = 1$, setting $\abs{\alpha_{j,t}} = \frac{1}{\sqrt{s}}$ maximizes $\sum_{j\in [s]}\abs{\alpha_{j,t}}$.
    So, we get,
    \begin{equation}
        \label{eqn:expect_khamprob1}
        \expec_\mathscr{R} \Big[ \norm{\ket{\psi_{L,x}} - \ket{\psi_{L,y}}}\Big] ~\le~ \frac{2\delta}{n-k} \sum_{t=0}^L \sum_{j\in [s]}\abs{\alpha_{j,t}} \le \frac{2\delta}{n-k} \sum_{t=0}^L \sqrt{s} = \frac{2(L+1)\delta\sqrt{s}}{n-k}.
    \end{equation}
    For any two inputs $x, y$ such that $f(x)\neq f(y)$, we need $\norm{\ket{\psi_{L,x}} - \ket{\psi_{L,y}}} = \Omega(1)$ to be able to distinguish them with some constant probability.
    With a fixed $x$, taking an expectation over all inputs $y$ in $\mathscr{R}$, we have $\expec_\mathscr{R} \Big[ \norm{\ket{\psi_{L,x}} - \ket{\psi_{L,y}}}\Big] = \Omega(1)$. Comparing this with the expression in Equation~\ref{eqn:expect_khamprob1}, we get
    \begin{equation*}
        \frac{2(L+1)\delta\sqrt{s}}{n-k} = \Omega(1) ~~~\text{or, }~~~ L = \Omega\Big(\frac{n-k}{\delta\sqrt{s}}\Big)
    \end{equation*}
    As $k\le n/2$, we have $n-k = \Omega(n)$ which gives $L = \Omega\Big(\frac{n}{\delta\sqrt{s}}\Big)$.

    \textbf{Case 2: $k>n/2$.} Fix an input $y$ such that $|y|=k+\delta$. Let $\mathscr{R}$ be the set of inputs $x$ such that $x_i=0$ whenever $y_i=0$ and $x$ differs from $y$ at exactly $\delta$ position out of the remaining $k+\delta$ positions.
    It is easy to see that the number of elements in $\mathscr{R}$ is $\binom{k+\delta}{k}$. Similar to Case 1, by taking an expectation over a uniform superposition of $x$ in $\mathscr{R}$ and using similar arguments, we can show that
    \begin{align*}
        \expec_\mathscr{R} \Big[ \norm{\ket{\psi_{L,x}} - \ket{\psi_{L,y}}}\Big] &\le \frac{2}{\binom{k+\delta}{k}} \sum_{t=0}^L \sum_{x\in \mathscr{R}} \sum_{j\in \Delta(x,y,t)} \abs{\alpha_{j,t}}
    \end{align*}
    Now, for any index $i\in [n]$, the number of inputs $x\in \mathscr{R}$ such that $x_i\neq y_i$ is either $0$ or exactly $\binom{k+\delta-1}{k}$.
    So, for any $j\in [s]$, the number of inputs $x\in \mathscr{R}$ such that $j\in \Delta(x,y,t)$ is either $0$ or $\binom{k+\delta-1}{k}$.
    This gives us
    \begin{equation*}
        \sum_{x\in\mathscr{R}}\sum_{j\in \Delta(x,y,t)} \abs{\alpha_{j,t}} 
        ~\le~ \binom{k+\delta-1}{k} \sum_{{j\in [s] :\atop j\in \Delta(x,y,t)}\atop\text{~for some }x\in \mathscr{R}}\abs{\alpha_{j,t}} 
        ~\le~ \binom{k+\delta-1}{k} \sum_{j\in [s]} \abs{\alpha_{j,t}}.
    \end{equation*}
    Using this, we refine the bound on the expectation as
    \begin{align}
        \expec_\mathscr{R} \Big[ \norm{\ket{\psi_{L,x}} - \ket{\psi_{L,y}}}\Big] 
        &\le \frac{2\binom{k+\delta-1}{k}}{\binom{k+\delta}{k}} \sum_{t=0}^L \sum_{j\in [s]} \abs{\alpha_{j,t}}
        \le \frac{2\delta}{k+\delta} \sum_{t=0}^L \sum_{j\in [s]} \abs{\alpha_{j,t}}\notag\\
        &\le \frac{2(L+1)\delta\sqrt{s}}{k}\label{eqn:expec-khamprob2}
    \end{align}
    For any two inputs $x, y$ such that $f(x)\neq f(y)$, we need $\norm{\ket{\psi_{L,x}} - \ket{\psi_{L,y}}} = \Omega(1)$ to be able to distinguish them with some constant probability.
    With a fixed $y$, taking an expectation over all inputs $x$ in $\mathscr{R}$, we have $\expec_\mathscr{R} \Big[ \norm{\ket{\psi_{L,x}} - \ket{\psi_{L,y}}}\Big] = \Omega(1)$.
    Comparing this with the expression in Equation~\ref{eqn:expec-khamprob2}, we get,
    \begin{equation*}
        \frac{2(L+1)\delta\sqrt{s}}{k} = \Omega(1) ~~~\text{or, }~~~ L = \Omega\Big(\frac{k}{\delta\sqrt{s}}\Big).
    \end{equation*}
    Since, $k>n/2$, we get $L = \Omega\Big(\frac{n}{\delta\sqrt{s}}\Big)$.
\end{proof}

Setting $\delta = O(1)$ gives rise to the following corollary.
\begin{corollary}
    \label{corr:lowerbound-gen-delta=1}
    Any \rgqbp of length $L$ and width $s$ that solves the ($k,k+1$) Hamming Decision problem has to satisfy $L = \Omega\Big(\frac{n}{\sqrt{s}}\Big)$.
\end{corollary}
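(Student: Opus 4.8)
The plan is to read off the corollary as the constant-$\delta$ case of Theorem~\ref{thm:lowerbound-khamprob}. That theorem already shows that any \rgqbp of length $L$ and width $s$ solving the $(k,k+\delta)$ Hamming decision problem obeys $L = \Omega\bigg(\frac{n}{\delta\sqrt{s}}\bigg)$. Setting $\delta = 1$ --- which is the intended reading of ``$(k,k+1)$'' and is a legitimate value of the parameter whenever the promise is non-vacuous, i.e.\ $k \ge 2$ so that $\delta < k$ holds --- collapses the bound to $L = \Omega\bigg(\frac{n}{\sqrt{s}}\bigg)$, with the factor $1/\delta = 1$ absorbed into the hidden constant. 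No fresh hybrid argument, hard input distribution, or amplitude-counting step is needed: all of the work lives in the proof of Theorem~\ref{thm:lowerbound-khamprob}, and the corollary is purely a matter of specialisation. More generally, the same one-line deduction gives $L = \Omega(n/\sqrt{s})$ for the $(k,k+\delta)$ problem for any $\delta = O(1)$, which is why the surrounding text phrases the hypothesis as $\delta = O(1)$.

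The one place a careful write-up should pause is the small-$k$ boundary, where $\delta = 1$ is not strictly less than $k$. For $k = 0$ the $(0,1)$ Hamming decision problem coincides with the Promise-OR problem, so the bound follows from Theorem~\ref{thm:lowerbound-or} instead. For $k = 1$ one can simply re-run the argument of Theorem~\ref{thm:lowerbound-khamprob} verbatim with the hard set $\mathscr{R}$ taken to be the weight-$2$ strings that agree with a fixed weight-$1$ string on the single coordinate it sets; the count of how many $y \in \mathscr{R}$ place a fixed node $j$ into $\Delta(x,y,t)$ changes only by constant factors, so the $n - k = \Omega(n)$ estimate --- and hence the $\Omega(n/\sqrt{s})$ conclusion --- is unchanged. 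I do not anticipate any genuine obstacle here; the purpose of stating this as a separate corollary is just to isolate the parameter regime ($\delta$ constant) that feeds the later applications to Parity, Majority, and threshold-$k$ functions.
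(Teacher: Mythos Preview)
Your proposal is correct and matches the paper's own argument, which is literally the one-line remark ``Setting $\delta = O(1)$ gives rise to the following corollary.'' Your extra care about the small-$k$ boundary (where the stated hypothesis $\delta < k$ fails) is more thorough than the paper itself, which simply ignores that edge case.
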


From this result, we can obtain a query-space lower bound for the class of all total symmetric functions.

\lowerboundsymmgqbp*

\begin{proof}
    Let $f : \{0,1\}^n \rightarrow \{0,1\}$ be a non-constant symmetric function.
    Then, there should exist some $k$ such that for all inputs $x$ of Hamming weight $k$, $f(x) = a$ for some $a\in \{0,1\}$ and all inputs $y$ of Hamming weight $k+1$, $f(y) = \overline{a}$.
    Now, there is a direct reduction from the partial function corresponding to the \khamprob to $f$ that is $g(x)=k \iff f(x)=a$ and $g(x)=k+1 \iff f(x)=\overline{a}$.
    This reduction ensures that any lower bound for solving the \khamprob also serves as a lower bound for computing $f$.
    Hence, we obtain the lower bound for computing $f$ from Theorem~\ref{thm:lowerbound-khamprob}.
\end{proof}

We can now apply this result to functions such as AND, Parity, and Majority and obtain the tradeoff of $L^2 s = \Omega(n^2)$. This tradeoff is tight for OR and Parity since Bera et al.\ constructed \gqbp with the same tradeoff in their work (see \cref{fig:gqbp-parity}) ~\cite{bera2023generalized}.

Since we have tight conversion algorithms between \gqbp and the quantum-query circuit, a bound on one model implies a bound on the other. We apply this idea on \cref{thm:lowerbound-gqbp-symm-function} to obtain the following lower bound on quantum query circuits.

\symmetricfunction*

It is known that computing any non-constant symmetric Boolean function on $n$-bit inputs requires $\Omega(\sqrt{n})$ queries to the input~\cite{beals2001quantum}; our result provides an alternative proof of a weaker form of this result where there is a restriction on the space.

\section{Conclusions}
\label{sec:conclusions}

In this work, we introduced a restricted version of the recently introduced \gqbp model, called \rgqbp.
Along with a few properties of r-GQBPs, we showed that any $q$-qubit 
$t$-query quantum query circuit can be reduced to a $2^q$-width $t$-length \rgqbp.
This reduction serves as a motivation to investigate the query-space lower bounds in the \rgqbp model since any lower bound on this model translates to a lower bound on the query circuit model without any overhead.
In addition, we also stated a reduction from the \rgqbp model to the query circuit model, although with some overhead.

We also present a series of lower bounds starting with a query-space lower bound of $L^2s = \Omega(n^2)$ for the unordered search in the \rgqbp model. Moreover, we extend this result to the \khamprob and proved a query-space lower bound of $L^2s = \Omega(n^2/\delta)$.
By a simple reduction from the ($k, k+1$) Hamming decision problem, we show that any \rgqbp that approximately computes any non-constant symmetric Boolean function has to satisfy $L^2s = \Omega(n^2)$.
While this lower bound is tight for functions such as $OR_n$, $AND_n$, and $Parity_n$, the tightness is unclear for functions like $Majority_n$ and $Threshold^k_n$.
With advice from the known results, we conjecture that any \rgqbp that approximately computes $Majority_n$ has to satisfy $Ls = \Omega(n^2)$ while the bound for $Threshold^k_n$ will be $Ls = \Omega(nk)$.

All the lower-bound results presented in this paper implicitly assume $s\in o(n^2)$; otherwise, we get trivial lower bounds. To obtain non-trivial lower bounds for query circuits, we need to drop this assumption and, instead, investigate r-GQBPs of larger width.

While the lower bounds for query circuits seem not so significant, we strongly believe that the r-GQBP lower bounds will lead to significant time-space lower bounds for computing functions in quantum circuits and Turing machines and is an open direction worth exploring.



\bibliography{refs}

\begin{thebibliography}{10}

\bibitem{abelson1979note}
Harold Abelson.
\newblock A note on time-space tradeoffs for computing continuous functions.
\newblock {\em Information Processing Letters}, 8(4):215--217, 1979.

\bibitem{ablayev2001computational}
Farid Ablayev, Aida Gainutdinova, and Marek Karpinski.
\newblock On computational power of quantum branching programs.
\newblock In {\em International Symposium on Fundamentals of Computation Theory}, pages 59--70. Springer, 2001.

\bibitem{abrahamson1986time}
Karl Abrahamson.
\newblock Time-space tradeoffs for branching programs contrasted with those for straight-line programs.
\newblock In {\em 27th Annual Symposium on Foundations of Computer Science (sfcs 1986)}, pages 402--409. IEEE, 1986.

\bibitem{ajtai2002determinism}
Mikl{\'o}s Ajtai.
\newblock Determinism versus nondeterminism for linear time rams with memory restrictions.
\newblock {\em Journal of Computer and System Sciences}, 65(1):2--37, 2002.

\bibitem{ambainis2006new}
Andris Ambainis, Robert {\v{S}}palek, and Ronald de~Wolf.
\newblock A new quantum lower bound method, with applications to direct product theorems and time-space tradeoffs.
\newblock In {\em Proceedings of the Thirty-Eighth Annual ACM Symposium on Theory of Computing}, pages 618--633, 2006.

\bibitem{beals2001quantum}
Robert Beals, Harry Buhrman, Richard Cleve, Michele Mosca, and Ronald De~Wolf.
\newblock Quantum lower bounds by polynomials.
\newblock {\em Journal of the ACM (JACM)}, 48(4):778--797, 2001.

\bibitem{beame1989general}
Paul Beame.
\newblock A general sequential time-space tradeoff for finding unique elements.
\newblock In {\em Proceedings of the twenty-first annual ACM symposium on Theory of computing}, pages 197--203, 1989.

\bibitem{beame2024quantum}
Paul Beame, Niels Kornerup, and Michael Whitmeyer.
\newblock Quantum time-space tradeoffs for matrix problems.
\newblock In {\em Proceedings of the 56th Annual ACM Symposium on Theory of Computing}, pages 596--607, 2024.

\bibitem{beame2003time}
Paul Beame, Michael Saks, Xiaodong Sun, and Erik Vee.
\newblock Time-space trade-off lower bounds for randomized computation of decision problems.
\newblock {\em Journal of the ACM (JACM)}, 50(2):154--195, 2003.

\bibitem{beame1998time}
Paul Beame, Michael Saks, and Jayram~S Thathachar.
\newblock Time-space tradeoffs for branching programs.
\newblock In {\em Proceedings 39th Annual Symposium on Foundations of Computer Science (Cat. No. 98CB36280)}, pages 254--263. IEEE, 1998.

\bibitem{bera2023generalized}
Debajyoti Bera and SAPV Tharrmashastha.
\newblock A generalized quantum branching program.
\newblock In {\em 43rd IARCS Annual Conference on Foundations of Software Technology and Theoretical Computer Science}, 2023.

\bibitem{borodin1980time}
Allan Borodin and Stephen Cook.
\newblock A time-space tradeoff for sorting on a general sequential model of computation.
\newblock In {\em Proceedings of the twelfth annual ACM symposium on Theory of computing}, pages 294--301, 1980.

\bibitem{borodin1987time}
Allan Borodin, Faith Fich, F~Meyer Auf Der~Heide, Eli Upfal, and Avi Wigderson.
\newblock A time-space tradeoff for element distinctness.
\newblock {\em SIAM Journal on Computing}, 16(1):97--99, 1987.

\bibitem{borodin1979time}
Allan Borodin, Michael~J Fischer, David~G Kirkpatrick, Nancy~A Lynch, and Martin Tompa.
\newblock A time-space tradeoff for sorting on non-oblivious machines.
\newblock In {\em 20th Annual Symposium on Foundations of Computer Science (sfcs 1979)}, pages 319--327. IEEE, 1979.

\bibitem{grigor1976application}
Dmitrii~Yur'evich Grigor'ev.
\newblock Application of separability and independence notions for proving lower bounds of circuit complexity.
\newblock {\em Zapiski Nauchnykh Seminarov POMI}, 60:38--48, 1976.

\bibitem{hamoudi2020quantum}
Yassine Hamoudi and Fr{\'e}d{\'e}ric Magniez.
\newblock Quantum time-space tradeoff for finding multiple collision pairs.
\newblock {\em arXiv preprint arXiv:2002.08944}, 2020.

\bibitem{klauck2007quantum}
Hartmut Klauck, Robert {\v{S}}palek, and Ronald De~Wolf.
\newblock Quantum and classical strong direct product theorems and optimal time-space tradeoffs.
\newblock {\em SIAM Journal on Computing}, 36(5):1472--1493, 2007.

\bibitem{lengauer1979upper}
Thomas Lengauer and Robert~Endre Tarjan.
\newblock Upper and lower bounds on time-space tradeoffs.
\newblock In {\em Proceedings of the eleventh annual ACM symposium on Theory of computing}, pages 262--277, 1979.

\bibitem{maslov2021quantum}
Dmitri Maslov, Jin-Sung Kim, Sergey Bravyi, Theodore~J Yoder, and Sarah Sheldon.
\newblock Quantum advantage for computations with limited space.
\newblock {\em Nature Physics}, 17(8):894--897, 2021.

\bibitem{munro1980selection}
J~Ian Munro and Mike~S Paterson.
\newblock Selection and sorting with limited storage.
\newblock {\em Theoretical computer science}, 12(3):315--323, 1980.

\bibitem{nakanishi2000ordered}
Masaki Nakanishi, Kiyoharu Hamaguchi, and Toshinobu Kashiwabara.
\newblock Ordered quantum branching programs are more powerful than ordered probabilistic branching programs under a bounded-width restriction.
\newblock In {\em International Computing and Combinatorics Conference}, pages 467--476. Springer, 2000.

\bibitem{pippenger1978time}
Nicholas Pippenger.
\newblock A time-space trade-off.
\newblock {\em Journal of the ACM (JACM)}, 25(3):509--515, 1978.

\bibitem{reischuk1980improved}
Rudiger Reischuk.
\newblock Improved bounds on the problem of time-space trade-off in the pebble game.
\newblock {\em Journal of the ACM (JACM)}, 27(4):839--849, 1980.

\bibitem{sauerhoff2005quantum}
Martin Sauerhoff and Detlef Sieling.
\newblock Quantum branching programs and space-bounded nonuniform quantum complexity.
\newblock {\em Theoretical Computer Science}, 334(1-3):177--225, 2005.

\bibitem{savage1978space}
J~Savage and Sowmitri Swamy.
\newblock Space-time trade-offs on the fft algorithm.
\newblock {\em IEEE Transactions on Information Theory}, 24(5):563--568, 1978.

\bibitem{tompa1978time}
Martin Tompa.
\newblock Time-space tradeoffs for computing functions, using connectivity properties of their circuits.
\newblock In {\em Proceedings of the tenth annual ACM symposium on Theory of computing}, pages 196--204, 1978.

\bibitem{yao1988near}
Andrew Chi-Chih Yao.
\newblock Near-optimal time-space tradeoff for element distinctness.
\newblock In {\em FOCS}, pages 91--97, 1988.

\bibitem{yesha1984time}
Yaacov Yesha.
\newblock Time-space tradeoffs for matrix multiplication and the discrete fourier transform on any general sequential random-access computer.
\newblock {\em Journal of Computer and System Sciences}, 29(2):183--197, 1984.

\end{thebibliography}




\end{document}